\documentclass[a4paper,12pt,eqno]{article}
\usepackage{latexsym,amssymb,amsfonts,amsmath,amsthm}
\usepackage{graphicx}
\setlength{\evensidemargin}{-3mm}
\setlength{\oddsidemargin}{-3mm}
\setlength{\topmargin}{-10mm}
\setlength{\textheight}{230mm}
\setlength{\textwidth}{165mm}

\def\qed{\hfill $\Box$}

\newtheorem{thm}{Theorem}[section]
\newtheorem{lem}[thm]{Lemma}
\newtheorem{cor}[thm]{Corollary}
\newtheorem{pro}[thm]{Proposition}
\newtheorem{rem}[thm]{Remark}

\newtheorem{definition}{Definition}


\newcommand{\bs}[1]{\boldsymbol{#1}}


\title{Quantum walks driven by quantum coins with two multiple eigenvalues}
\author{
{\small Norio Konno}\\
{\scriptsize Department of Applied Mathematics, Faculty of Engineering, Yokohama National University}\\
{\scriptsize Hodogaya, Yokohama 240-8501, Japan}\\
{\small Iwao Sato}\\
{\scriptsize Oyama National College of Technology}\\
{\scriptsize Oyama, Tochigi 323-0806, Japan}\\
{\small Etsuo Segawa\footnote{corresponding author: e-mail: segawa-etsuo-tb@ynu.ac.jp} }\\
{\scriptsize Graduate School of Environment and  Information Sciences}\\
{\scriptsize Yokohama National University}\\
{\scriptsize Hodogaya, Yokohama 240-8501, Japan}\\
{\small Yutaka Shikano } \\
{\scriptsize Graduate School of Science and Technology, Gunma University} \\ 
{\scriptsize Maebashi, Gunma 371-8510, Japan} \\
{\scriptsize Institute for Quantum Studies, Chapman University} \\
{\scriptsize Orange, CA 92866, USA} \\
{\scriptsize JST PRESTO} \\
{\scriptsize Kawaguchi, Saitama 332-0012, Japan}
}
\date{}
\begin{document}
\maketitle

\par\noindent
\begin{small}
\par\noindent
{\bf Abstract}. We consider a spectral analysis on the quantum walks on graph $G=(V,E)$ with the local coin operators $\{C_u\}_{u\in V}$ and the flip flop shift. 
The quantum coin operators have commonly two distinct eigenvalues $\kappa,\kappa'$ and  $p=\dim(\ker(\kappa-C_u))$ for any $u\in V$ with $1\leq p\leq \delta(G)$, where $\delta(G)$ is the minimum degrees of $G$.  
We show that this quantum walk can be decomposed into a cellular automaton on $\ell^2(V;\mathbb{C}^p)$ whose time evolution is described by a self adjoint operator $T$ and its remainder. We obtain how the eigenvalues and its eigenspace of $T$ are lifted up to as those of the original quantum walk. As an application, we express the eigenpolynomial of the Grover walk on $\mathbb{Z}^d$ with the moving shift in the Fourier space. 
  
\footnote[0]{
{\it Keywords: } 
Quantum walk, Spectral mapping theorem  
}
\end{small}

\section{Introduction}
Spectral information of the Grover walk, Szegedy walk~\cite{Sze}, and staggered walk on connected graphs can be reduced to that of the underlying random walks and cyclic information of the graphs (see \cite{MOS, HPSS} and its reference therein). These types of quantum walks can be expressed by the graph zeta of the Hashimoto type and can be deformed to that of the Ihara type~\cite{EmmsETAL2006,KS}. Some graph structures are extracted considering the support of the time evolution operator from the zeta expression~\cite{HKSS,KSS,RenETAL}.

We refer to the family of quantum walks whose spectral information can be reduced to that of an underlying cellular automaton, whose transition weights are given by complex numbers on a vertex set as Ihara's class. For underlying cellular automatons that are random walks, some interesting behaviours of such quantum walks in Ihara's class have been mathematically revealed. For example, the  efficiency of the quantum walk in quantum search algorithms (see \cite{Por} and its references) and periodicity~\cite{HKSS2, Yo}, a type of perfect state transfer~\cite{AdaH} of the walk and a characterisation of the edge state using recurrence properties of a random walk~\cite{IKS} have been investigated. However, Ihara's class is only one among a wide variety of classes of quantum walks. Indeed, the time evolution operator of a quantum walk on a graph is determined by the choice of the local quantum coin assigned at each vertex $u$ which is a unitary matrix on the $d(u)$-dimensional space and the shift operator which is a permutation on the arcs satisfying $t(a)=o(a)$. Here, $d(u)$ is the degree of vertex $u$ and,  $t(a)$ and $o(a)$ are the terminal and origin vertices of the arc $a$, respectively. The shift operator of the transposition on the arcs is called a flip-flop shift. Particularly, if the shift operator is a flip-flop, and we choose quantum coins such that  
\[ \mathrm{Spec}(C_u)\subset\{\kappa,\kappa'\},\;\dim\ker(\kappa-C_u)=1, \text{ for any $u\in V$},\]
where $\kappa,\kappa'\in \mathbb{C}$ with $|\kappa|=|\kappa'|=1$ are independent of $u$, then the quantum walk is included in Ihara's class. 
Here, $\mathrm{Spec}(X)$ is the set of eigenvalues of a square matrix $X$. 
In contrast, the Grover walk with the {\it moving shift} on $\mathbb{Z}^d$ which seems to be a natural setting, is not included in the Ihara class for $d\geq 3$ because, as described in greater detail herein, the dimensionalities of $\ker(\kappa -C_u)$ and $\ker(\kappa'-C_u)$ are greater than $2$ when $d\geq 3$ after rewriting the time evolution to describe the shift operator as a flip-flop shift operator. Recently, interesting aspects of quantum walks not included in Ihara's class have been reported \cite{KT,MNJ}.  In this paper, we construct a class such that not only the Grover walk and Szegedy walk but also quantum walks on $\mathbb{Z}^d$ with both the flip-flop shift and moving shift types are included. We show how the properties obtained by previous studies~\cite{MOS,HPSS} hold and how they are deformed in this class. This small extension provides a new motivation for investigating a Hermitian matrix on $\ell^2(V;\mathbb{C}^p)$, where $p$ is a parameter of the extension model. Particularly, if $p=1$, then Ihara's class is reproduced.    
As an application, we obtain the eigenpolynomial of the Grover walk on $\mathbb{Z}^d$ with the moving shift type. 

This paper is organized as follows. 
In Section~2, we explain that all the coined quantum walk can be regarded as a quantum walk with a flip-flop shift by considering the local permutation operator to the local coin operators. 
In Section~3, we propose the quantum walk model considered. 
We provide some properties of the boundary operators as well as some examples of the discriminant operator $T$ whose eigenvalues are raised to the unit circle in the complex plane with a one-to-two mapping rule. The discriminant operator is symmetric and represents a walk with a matrix-valued weight. 
In Section~4, we present an example of the Grover walk on $\mathbb{Z}^d$ with a moving shift. We demonstrate that the eigenpolynomial of this quantum walk can be obtained exactly using our main theorem. 
In Section~5, we present the spectral information comprising our main results. First, we show an outline of how the eigenvalues are raised to the unit circle as the eigenvalues of the time evolution operator by using the zeta function method. Second, we refine this mapping theorem and show how the eigenspace of $T$ is raised as the eigenspace of $U$. Finally, we provide a summary and discussion in the final Section.

\section{Definition of several quantum walks on a graph} 

\subsection{Notation of graphs} 

Graphs treated here are finite. 
Let $G=(V(G),E(G))$ be a connected graph 
with the set $V(G)$ of vertices and the set $E(G)$ of unoriented edges $uv$ 
joining two vertices $u$ and $v$. 
Two vertices $u$ and $v$ of $G$ are {\em adjacent} if there exits an edge $e$ joining $u$ and $v$ in $G$. 
Furthermore, two vertices $u$ and $v$ of $G$ are {\em incident} to $e$.  
The {\em degree} $\deg v = \deg {}_G \  v$ of a vertex $v$ of $G$ is the number of edges incident to $v$. 
For a natural number $k$, a graph $G$ is called {\em $k$-regular } if $\deg {}_G \  v=k$ for each vertex $v$ of $G$. 

For $uv \in E(G)$, an arc $(u,v)$ is the oriented edge from $u$ to $v$. 
Set $A(G)= \{ (u,v),(v,u) | uv \in E(G) \} $. 
For $a=(u,v) \in A(G)$, set $u=o(a)$ and $v=t(a)$. 
Furthermore, let $\bar{a}=a^{-1}=(v,u)$ be the {\em inverse} of $a=(u,v)$. 
A {\em path} $P=(v_1, v_2 , \ldots , v_{n+1} )$ {\em of length $n$} in $G$ is a sequence of $(n+1)$ vertices such that 
$v_i v_{i+1} \in E(G)$ for $i=1, \ldots ,n$. 
Then $P$ is called a {\em $(v_1 , v_{n+1} )$-path}. 
If $e_i =v_i v_{i+1} (1 \leq i \leq n)$, then we write $P=( e_1 , \ldots e_n )$.

\subsection{Quantum walk on a graph} 
For a discrete set $\Omega$, the vector space whose standard basis is labelled by each element of $\Omega$ is denoted by $\mathbb{C}^{\Omega}$. 
The standard basis of $\mathbb{C}^{\Omega}$ are described by 
    \[\delta_\omega^{(\Omega)}(\omega')=\begin{cases} 1 & \text{: $\omega=\omega'$}\\ 0 & \text{: otherwise,} \end{cases}\]
for any $\omega\in \Omega$.
Let us set a permutation $\pi:A(G)\to A(G)$ on the symmetric arc set $A:=A(G)$, satisfying $o(\pi(a))=t(a)$ for any $a\in A(G)$. If the permutation $\pi$ fulfils $\pi(a)=a^{-1}$, for any $a\in A$, then we call $\pi_0:=\pi$ a flip-flop permutation. 
We set $\mathcal{A}:=\mathbb{C}^A$. 
Let $S_\pi: \mathcal{A}\to \mathcal{A}$ represent the permutation $\pi$ by 
 \[ S_\pi\delta_a^{(A)}=\delta_{\pi(a)} \]
for any $a\in A$. 
We call $S_\pi$ a shift operator of $\pi$. 
Particularly, if $\pi$ is the flip flop permutation, we call $S_0:=S_{\pi_0}$ a flip-flop shift operator. 

For a vertex $u\in V$, the subset $A_u\subset A(G)$, in which all terminal vertices are commonly $u\in V$, is denoted by 
    \[ A_u=\{ a\in A \;|\; t(a)=u\}. \]
We set a unitary operator on $\mathbb{C}^{A_u}$ by $C_u$, which is represented by a $d(u)\times d(u)$ unitary matrix. To extend the domain of $C_u$ to the entire space $\mathcal{A}$, let us introduce $\chi_u:\mathbb{C}^{A}\to \mathbb{C}^{A_u}$ as 
\[ (\chi_u\psi)(a) = \psi(a) \]
for any $a\in A_u$. Note that the adjoint of $\chi_u$ is expressed as  
    \[ (\chi_u^*\psi)(a) = \begin{cases}\psi(a) & \text{: $a\in A_u$}\\ 0 & \text{: otherwise.}\end{cases} \]
Because $A(G)$ can be described by the disjoint union $\sqcup_{u\in V}A_u$, the entire space $\mathbb{C}^{A}$ can be decomposed into  $\mathcal{A}= \oplus_{u\in V} (\chi_u^*\mathbb{C}^{A_u}\chi_u)$.
Under this decomposition, the coin operator $C$ on $A(G)$ is defined by 
    \[ C=\bigoplus_{u\in V} (\chi_u^*C_u\chi_u).  \]
\begin{definition}
The time evolution operator of the quantum walk on $G=(V,E)$ with the permutation $\pi$ on $A(G)$ and the sequence of local coin operators $(C_u)_{u\in V}$ is defined by $U=U(\pi,(C_u)_{u\in V})=S_\pi C$.
\end{definition}
The time iteration of the quantum walk on $G$ is described by $\psi_{n+1}=U\psi_n$ with some initial state $\psi_0\in \mathcal{A}$. 
Note that by the definition of $U$, if $t(b)\neq o(a)$, then 
    \[ \langle \delta^{(A)}_a,U\delta^{(A)}_b\rangle =0 \]
holds, and $\chi_u\chi_u^*=I_{A_u}$. Note that there are many possibilities for the choice of such a $\pi$; indeed, we have $\prod_{u\in V}d(u)!$ choices. 
However, any time evolution operator $U_\pi=S_\pi C$ can be rewritten by a time evolution operator with a flip-flop shift as follows.
\begin{pro}
For any permutation $\pi$ on $A=A(G)$ satisfying $t(a)=o(\pi(a))$ and the coin operator $C=\oplus_{u\in V}C_u$, we have 
\[U(\pi,(C_u)_{u\in V})=U(\pi_0,(C_u')_{u\in V}).\]  
Here $C'_u=Q_u(\pi)C_u$ with $Q_u(\pi)\delta_a^{(A_u)}=\delta_{(\pi(a))^{-1}}^{(A_u)}$. 
\end{pro}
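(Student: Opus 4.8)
The plan is to factor the arbitrary shift $S_\pi$ through the flip-flop shift $S_0$ and to absorb the discrepancy into the local coins. First I would record that $S_0$ is an involution: since $(a^{-1})^{-1}=a$ for every $a\in A$, we have $S_0^2=I$, hence $S_0^{-1}=S_0$. This lets me define the correction operator $Q:=S_0 S_\pi$, which by construction satisfies $S_\pi=S_0 Q$. Evaluating on the standard basis,
\[ Q\delta_a^{(A)} = S_0 S_\pi \delta_a^{(A)} = S_0\delta_{\pi(a)}^{(A)} = \delta_{(\pi(a))^{-1}}^{(A)}, \]
so $Q$ is exactly the permutation operator induced by $a\mapsto(\pi(a))^{-1}$.

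The key structural step is to show that $Q$ respects the vertex decomposition $\mathcal{A}=\oplus_{u\in V}(\chi_u^*\mathbb{C}^{A_u}\chi_u)$. Here I would use the standing compatibility condition $o(\pi(a))=t(a)$: if $a\in A_u$, i.e. $t(a)=u$, then $o(\pi(a))=u$, and therefore $t((\pi(a))^{-1})=o(\pi(a))=u$, so $(\pi(a))^{-1}\in A_u$. Thus $Q$ maps each $\chi_u^*\mathbb{C}^{A_u}$ into itself, and because $a\mapsto(\pi(a))^{-1}$ is injective on the finite set $A_u$ it is in fact a bijection of $A_u$; its restriction to $\mathbb{C}^{A_u}$ is precisely the operator $Q_u(\pi)$ of the statement. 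Consequently $Q=\oplus_{u\in V}(\chi_u^*Q_u(\pi)\chi_u)$, and since each $Q_u(\pi)$ is a permutation matrix it is unitary, so that $C_u'=Q_u(\pi)C_u$ is again unitary and the right-hand walk is well defined.

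Finally I would assemble the coins. Both $Q$ and $C=\oplus_{u\in V}(\chi_u^*C_u\chi_u)$ are block diagonal with respect to the same decomposition, so their product multiplies blockwise,
\[ QC = \bigoplus_{u\in V}\big(\chi_u^*\,Q_u(\pi)C_u\,\chi_u\big) = \bigoplus_{u\in V}\big(\chi_u^*C_u'\chi_u\big) = C'. \]
Combining this with $S_\pi=S_0 Q$ yields $U(\pi,(C_u)_{u\in V}) = S_\pi C = S_0(QC) = S_0 C' = U(\pi_0,(C_u')_{u\in V})$, as claimed.

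I expect the only genuine subtlety, the hard part, to be the block-diagonality verification: one must check both that $(\pi(a))^{-1}$ lands in the same fibre $A_u$ as $a$ (which is exactly where $o(\pi(a))=t(a)$ enters) and that the resulting restriction is a bijection of each $A_u$ rather than merely an injection, so that $Q_u(\pi)$ is a bona fide permutation operator and hence unitary. Everything else is bookkeeping on basis vectors; an equivalent route, should the operator factorisation feel too slick, is to verify $S_\pi C\,\delta_b^{(A)} = S_0 C'\,\delta_b^{(A)}$ directly for each $b\in A$, which reduces to the single identity $S_0\delta_{(\pi(a))^{-1}}^{(A)}=\delta_{\pi(a)}^{(A)}$.
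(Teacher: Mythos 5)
Your proof is correct and follows essentially the same route as the paper: factor $S_\pi = S_0(S_0S_\pi)$ using the involution property $S_0^2=I$, use the condition $t(a)=o(\pi(a))$ to show that $a\mapsto(\pi(a))^{-1}$ preserves each fibre $A_u$ so that $S_0S_\pi$ is block diagonal with blocks $Q_u(\pi)$, and absorb these blocks into the coin. Your write-up merely makes explicit some points the paper leaves implicit (bijectivity on each $A_u$ and unitarity of $C_u'$), which is fine but not a different argument.
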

\begin{proof}
Because $S_0^2=I$, we have 
    \begin{align*}
        U_\pi &= S_\pi C =S_0 (S_0 S_\pi C). 
    \end{align*}
The composition of the permutations $\pi_0\circ \pi$ is 
\[ a \stackrel{\pi}{\mapsto} \pi(a) \stackrel{\pi_0}{\mapsto} (\pi(a))^{-1}.  \]
Because the permutation $\pi$ satisfies with $t(a)=o(\pi(a))$, we have $t(a)=t(\pi(a))^{-1}$. 
This means that the composition $\pi_0\circ \pi_m$ is a local permutation on the same terminal vertex. 
This implies that $S_0S_\pi$ may be decomposed into $\oplus_{u\in V}(\chi_u^*Q_u(\pi)\chi_u)$. 
Here $Q_u(\pi)\delta_{a}^{(A_u)}=\delta_{(\pi(a))^{-1}}^{(A_u)}$. Therefore we can regard $S_0S_\pi C=\oplus_{u\in V} \chi_u^*(Q_u(\pi)C_u)\chi_u$ as the coin operator.  
\end{proof}

The Grover walk is a special case of a quantum walk performed by choosing the flip-flop shift operator and the Grover matrix $C_u=(2/d(u) )J_u-I_u$ as the coin operators. Here, $J_u$ and $I_u$ are the all-ones matrix and the identity matrix on $\mathbb{C}^{A_u}$.  More precisely, ${\bf U} ={\bf U} (G)=( U_{ef} )_{e,f \in D(G)} $, 
of $G$ is defined by 
\[
U_{ef} =\left\{
\begin{array}{ll}
2/d_{t(f)} (=2/d_{o(e)} ) & \mbox{if $t(f)=o(e)$ and $f \neq e^{-1} $, } \\
2/d_{t(f)} -1 & \mbox{if $f= e^{-1} $, } \\
0 & \mbox{otherwise.}
\end{array}
\right. 
\]

\section{The quantum walk treated in this paper } 

\subsection{Motivation}\label{sect:motivation1}
In the above section, we see that for any time evolution operator of quantum walks,  applying appropriate permutation to the row vectors of each local coin operator, we can reproduce the original time evolution by the flip flop shift time evolution.  
The spectrum mapping theorem of quantum walk is quite useful to see the spectral information of a special class of quantum walks. 
The spectrum of such a quantum walk is 
generated by the fundamental cycles of the graph, and  
inherited by the spectrum of a self adjoint operator $T$ on $\mathbb{C}^V$, which is,  under some condition, isomorphic to  transition matrix of a reversible random walk on the graph.   
The quantum walks, which can be applied the traditional spectrum mapping theorem, have commonly the following property:
\begin{enumerate}
    \item $\mathrm{Spec}(C_u)\subseteq \{1,-1\}$ ;
    \item $\dim\ker(1-C_u)= 1$.
\end{enumerate}

In this paper, we extend this condition by
\begin{enumerate}
    \item $\mathrm{Spec}(C_u)\subseteq \{\kappa,\kappa'\}$ ;
    \item $\dim\ker(\kappa-C_u)= p$.
\end{enumerate}
Here $\kappa\neq \kappa'$ to avoid a trivial walk. 

To explain the motivation of our extended setting, 
let us consider the Grover walk with the moving shift on $\mathbb{Z}^d$. The time evolution operator is $U_m=S_mC$, where $C=\oplus_{u\in V}\mathrm{Gr}(2d)$, where $\mathrm{Gr}(2d)$ is the $2d$-dimensional Grover matrix. To explain the shift operator $S_m$, let us prepare the notation. 
Let $e_j\in \mathbb{Z}^d$ be the standard basis of $\mathbb{R}^d$ ($j=1,\dots,d$). The arc whose terminal vertex is $x\in \mathbb{Z}^d$ and the origin is $x\mp e_j$ is denoted by $(x;\pm j)$. The permutation of the moving shift $S_m$ is defined by 
\[ \pi_m(x;\pm j)=(x\pm e_j;\pm j). \]
We set the moving shift operator by $S_m:=S_{\pi_m}$.
On the other hand, the permutation of the flip flop shift $S_0$ is defined by 
\[ \pi_0(x;\pm j)=(x\mp e_j;\mp j). \]
We set the flip flop shift operator by $S_0:=S_{\pi_0}$. 
The composition $\pi_0\circ\pi_m$ is 
\[ (x;\pm j) \stackrel{\pi_m}{\mapsto} (x\pm e_j;\pm j) \stackrel{\pi_0}{\mapsto} (x;\mp j). \]
Let the permutation matrix $\sigma:=P_x$ to the coin operator be the transposition $(x;j)\leftrightarrow (x;-j)$ $j=1,\dots,d$.  
Thus putting $C'=\oplus_{u\in V} (\sigma \cdot \mathrm{Gr}(2d))$, we have 
\[ U_m= S_0C'. \]
Let us compute the spectrum of $C'_d:=\sigma\cdot \mathrm{Gr}(2d)$. 
It is easy to see that $C'_d \bs{u}_d=\bs{u}_d$, where $\bs{u}_d$ is the uniform vector in $\mathbb{C}^{2d}$. 
Let the computational basis of $C_x=C_d'$ are labeled by $(x;1),(x;-1),\dots,(x;d),(x;-d)$ by this order. 
If $\bs{v}=\oplus_{j=1}^d\bs{v}_j\in \mathbb{C}^{2d}$ with some $2$-dimensional vectors $\bs{v}_j$ ($j=1,\dots,d$) is orthogonal to $\bs{u}_d$, then we have $C'_d\bs{v}=-\sigma \bs{v}$. 
Since $\sigma$ is the transposition, this permutation matrix can be decomposed into  $\sigma=\oplus_{j=1}^d\sigma_1$, where $\sigma_1$ is the Pauli matrix. If $\sigma_1\bs{v}_j=\bs{v}_j$ for any $j=1,\dots,d$, then we have $C'_d\bs{v}=-\bs{v}$.
We can construct such eigenbasis so that they are orthogonal to $\bs{u}_d$ by 
\[ \left\{ [1,\omega,\dots,\omega^{d-1}]^\top \otimes 
\begin{bmatrix}1\\ 1 \end{bmatrix} \;:\; \omega^d=1,\;\omega\neq 1  \right\}. \]
The dimension is $d-1$. 
On the other hand, if $\bs{v}_j=-\sigma\bs{v}_j$ for any $j=1,\dots,d$, we have $C'_d\bs{v}=\bs{v}$. 
We can construct such eigenbasis by 
\[ \left\{ \bs{e}_j\otimes \begin{bmatrix}1\\ -1 \end{bmatrix} \;:\; j=1,\dots,d  \right\}, \]
where $\bs{e}_j\in \mathbb{C}^d$ ($j=1,\dots,d$) are the standard basis of $\mathbb{C}^d$.  
The dimension is $d$.
We summarize the above statement about the spectrum of $C'_d$ in the following. 
\begin{lem}\label{lem:Grover}
Let $C'_d=\sigma\cdot \mathrm{Gr}(2d)$. Then we have 
\begin{enumerate}
    \item $\mathrm{Spec}(C'_d)=\{ 1,-1 \}$,
    \item $\dim(\ker(1-C'_d))=d+1$, $\dim(\ker(1+C'_d))=d-1$.
\end{enumerate}
\end{lem}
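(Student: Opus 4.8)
The plan is to reduce everything to a single clean identity for $C'_d$ and then diagonalise it through the natural involution $\sigma$. First I would rewrite the Grover matrix as $\mathrm{Gr}(2d)=\tfrac{1}{d}J-I$, where $J$ is the all-ones matrix on $\mathbb{C}^{2d}$. Since permuting the rows of the all-ones matrix changes nothing, $\sigma J=J$, and hence
\[ C'_d=\sigma\,\mathrm{Gr}(2d)=\tfrac{1}{d}\,\sigma J-\sigma=\tfrac{1}{d}J-\sigma. \]
This single formula isolates the two competing pieces: the rank-one operator $\tfrac1d J$ and the involution $\sigma$.

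Next I would exploit that $\sigma=\oplus_{j=1}^d\sigma_1$ is a self-adjoint involution, with $\sigma_1$ the Pauli swap acting on each pair $((x;j),(x;-j))$, so that $\mathbb{C}^{2d}=S_+\oplus S_-$, where $S_\pm=\ker(\sigma\mp I)$ are the symmetric and antisymmetric subspaces, each of dimension $d$. The key structural observation is that both subspaces are invariant under $C'_d$: indeed $J$ sends every vector to a multiple of $\bs{u}_d\in S_+$, and $J$ annihilates $S_-$ because every antisymmetric vector has zero coordinate sum. Thus $C'_d$ is block diagonal with respect to $S_+\oplus S_-$, and it suffices to diagonalise each block separately.

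Then I would compute the two blocks directly. On $S_-$ we have $J=0$ and $\sigma=-I$, so $C'_d|_{S_-}=I$, contributing eigenvalue $1$ with multiplicity $d$. On $S_+$ we have $\sigma=I$, so $C'_d|_{S_+}=\tfrac1d J-I$; the restriction of $J$ to $S_+$ has $\bs{u}_d$ as its unique nonzero eigenvector (eigenvalue $2d$) and annihilates the complementary $(d-1)$-dimensional subspace $S_+\cap\{\bs{u}_d\}^\perp$. Hence $C'_d\bs{u}_d=(\tfrac{2d}{d}-1)\bs{u}_d=\bs{u}_d$, while $C'_d=-I$ on $S_+\cap\{\bs{u}_d\}^\perp$. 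Collecting the pieces yields eigenvalue $1$ with multiplicity $d+1$ (from $S_-$ together with $\bs{u}_d$) and eigenvalue $-1$ with multiplicity $d-1$, which establishes both assertions; the multiplicities sum to $2d$ as a consistency check.

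I do not expect a genuine obstacle, since the entire argument is finite-dimensional linear algebra. The only point requiring care is the invariance claim in the second step, namely verifying that $J$ annihilates $S_-$ and maps $S_+$ into $\mathrm{span}(\bs{u}_d)$, because this is precisely what guarantees that the $\sigma$-eigenspaces are simultaneously $C'_d$-invariant and lets the rank-one term and the involution be diagonalised together rather than interfering with each other.
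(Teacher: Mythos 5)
Your proof is correct, and it rests on the same two ingredients as the paper's argument --- the uniform vector $\bs{u}_d$ and the splitting of $\mathbb{C}^{2d}$ into the symmetric and antisymmetric subspaces $S_\pm$ of the involution $\sigma$ --- but it is organized in a genuinely different way. The paper works pointwise on vectors: it checks $C'_d\bs{u}_d=\bs{u}_d$ directly, observes that $C'_d\bs{v}=-\sigma\bs{v}$ for $\bs{v}\perp\bs{u}_d$, and then \emph{exhibits explicit eigenbases}, namely the roots-of-unity vectors $[1,\omega,\dots,\omega^{d-1}]^\top\otimes[1\;1]^\top$ ($\omega^d=1$, $\omega\neq1$) for eigenvalue $-1$ and the vectors $\bs{e}_j\otimes[1\;-1]^\top$ for eigenvalue $+1$. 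You instead compress the whole structure into the single identity $C'_d=\tfrac{1}{d}J-\sigma$, split by $\sigma$ first, and diagonalise the two blocks abstractly; the invariance point you flag (that $J$ annihilates $S_-$ and maps $S_+$ into $\mathrm{span}(\bs{u}_d)$) is indeed the only thing requiring verification, and your count $d+(1)+(d-1)=2d$ closes the argument. Your route is cleaner and basis-free; what the paper's more concrete route buys is the explicit CONS of $\ker(-1-C'_d)$, which is not a luxury here: it is exactly what is used later in Section~\ref{sect:example}, Eq.~(\ref{eq:omomi}), to write down the weights $w(\bs{x};\epsilon j)$ and hence the matrix form of the discriminant $\hat{T}(\bs{k})$. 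So your proof fully establishes the lemma, but for the paper's subsequent application it would need to be supplemented by an explicit orthonormal basis of each eigenspace.
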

When $d=1$, the walk becomes trivially zigzag walking, because $C_d'$ is the identity matrix. 
When $d=2$, we see the condition  $\dim(\kappa-C_u)=1$ is satisfied by putting $\kappa=-1$ and $\kappa'=+1$, the traditional spectral mapping theorem can be applied just multiplying $(-1)$ to the entire time evolution operator $U$. 
On the other hand, when $d\geq 3$, the condition for the traditional spectral mapping is broken. 
Konno and Takahashi~\cite{KT} show that the eigenspace which causes the localization of the Grover walk with the moving shift on $\mathbb{Z}^d$ is generated by the unit of each $d$-dimensional hypercube, while the one with the flip flop shift is generated by every quadrangle cycle. Then the number of arcs which constructs the unit of  generator of the birth eigenvector is greater than the one with the flip flop shift. 
Indeed, the number of arcs for the moving shift is $d\times 2^d$ while the one for the flip flop shift is $4d$. 
In such a natural setting of the moving shift, 
obtaining the spectral information is still open from the view point of the spectral mapping theorem, then in this paper, we extended condition imposing to the coin operator and consider the eigenpolynomial of the time evolution operator.  

\subsection{Our setting}
Let $\delta(G)$  be the minimum degree of $G$. 
In this paper, fixing a natural number $1\leq  p \leq  \delta(G)$ and distinct unit complex numbers $\kappa$ and $\kappa'$, we relax the condition imposing to the coin operator by 
\begin{enumerate}
    \item $\mathrm{Spec}(C_u)\subseteq \{\kappa, \kappa'\}$ for any $u\in V$;
    \item $\dim\ker(\kappa-C_u)= p$,
\end{enumerate}
and consider the following flip flop shift operator; that is, $(S\psi)(a)=\psi(\bar{a})$. 
The time evolution operator is $U=SC$. 

We set $\{\alpha_u^{(j)}\}_{j=1}^{p}$ as a completely orthogonal normalized system (CONS) of $\ker(\kappa-C_u)$. Let $w(\cdot): A\to \mathbb{C}^p$ be defined by
    \[ w(a)= 
    \begin{bmatrix}  {\alpha_{t(a)}^{(1)}}(a)^* \\ \vdots \\ {\alpha_{t(a)}^{(p)}}(a)^*\end{bmatrix} . \]
Using this, we set a matrix valued weight associated with the motion of a walker moving along arcs by $W: A\to M_p(\mathbb{C})$ so that $W(a):=w(a)w(\bar{a})^*$, that is, 
    \[ (W(a))_{i,j} = \overline{\alpha_{t(a)}^{(i)}(a)}\;\;\alpha_{o(a)}^{(j)}(\bar{a}),\;\;(i,j\in\{1,\dots,p\}). \]
Let $\mathcal{V}_p:=\mathbb{C}^V\otimes \mathbb{C}^p$. 
For any $\vec{f},\vec{g}\in \mathcal{V}_p$, the inner product is $\langle \vec{g},\vec{f} \rangle_{\mathcal{V}_p}=\sum_{u\in V}\langle \vec{g}(u),\vec{f(u)} \rangle_{\mathbb{C}^p}$. 
\begin{definition}
The operator $T$ on $\mathcal{V}_p$ is defined by
\[ (T)_{u,v}:=(\delta_u^{(V)} \otimes I_p)^*\; T\; (\delta_v^{(V)}\otimes I_p )=\sum_{o(a)=u,\;t(a)=v} W(a)  \]
for any $u,v\in V$. 
We call $T$ the discriminant operator. 
\end{definition}
Note that the summation shows an existence of the  multi-edges between $u$ and $v$.\footnote{This will be useful to consider the quotient graph on the crystal lattice in the Fourier space, just changing the shift operator $S^{(\theta)}\delta_a=e^{i\theta(a)}\delta_{\bar{a}}$ by a one-form function $\theta(\bar{a})=-\theta(a)$. }
The $u,v$ element of $T$ describes the $p$-dimensional matrix valued weight associated with the motion of a walker from $v$ to $u$. 
Therefore 
\[ (T^n)_{u,v}=\sum_{\begin{matrix}(a_1,\dots,a_n)\in A^{n} \text{ such that}\\ 
v=o(a_1),\; t(a_1)=o(a_2),\dots,t(a_{n-1})=o(a_n),\; t(a_n)=u
\end{matrix}}W(a_n)\dots W(a_1) \]
for any $n\geq 1$. Note that the $(u,v)$ element of $T$ is a $p$-dimensional matrix. 
Especially, when $p=1$, the discriminant operator is reduced to  $W(a)=\overline{\alpha_{t(a)}(a)}\alpha_{o(a)}(\bar{a})\in \mathbb{C}$. Let us put $\eta\in \mathbb{C}^A$ by $\eta(a):=\alpha_{t(a)}(a)$ for $p=1$ case. If there exists $\pi\in \mathbb{C}^V$ such that  $\eta(a)\pi(t(a))=\eta(\bar{a})\pi(o(a))$, then $T=DPD^{-1}$ with $(Df)(u)=\pi(u)f(u)$ and $P=[P_{u,v}]_{u,v\in V}= \sum_{t(a)=u,o(a)=v}|\eta(a)|^2 $ which is a  reversible probability transition matrix. 

Let $\tilde{\alpha}_u^{(j)}$ be the extension of  $\alpha_{u}^{(j)}\in \mathbb{C}^{A_u}$ to $\mathbb{C}^A$ such that 
    \[ \tilde{\alpha}_{u}^{(j)}(a) = \begin{cases}
    \alpha_u^{(j)}(a) & \text{: $t(a)=u$} \\
    0 & \text{: otherwise.}
    \end{cases} \] 
Let $\tilde{w}(a)$ be the extension of $w(a)\in \mathbb{C}^p$ to  $\mathcal{V}_p$ such that 
    \[ (\tilde{w}(a))(u)=\begin{cases}
    w(a) & \text{: $t(a)=u$,} \\
    0 & \text{: otherwise.}
    \end{cases} \]
When the sets of vertices and arcs $V=\{u_1,\dots,u_n\}$ and $A=\{a_1,\dots,a_m\}$, the boundary operator $K:  \mathcal{V}_p \to \mathcal{A}$ is represented by the following matrix form:
    \[ K:=[\tilde{\alpha}^{(1)}_{u_1}\cdots \tilde{\alpha}^{(p)}_{u_1}\;\cdots\;\tilde{\alpha}^{(1)}_{u_n}\cdots \tilde{\alpha}^{(p)}_{u_n}]= \begin{bmatrix}\tilde{w}(a_1)^* \\ \vdots \\ \tilde{w}(a_m)^* \end{bmatrix}
    \]
which is a $|A|\times p|V|$ matrix. 
An equivalent expression of $K$ is 
    \[ (Kf)(a) = \langle \; w(a), f(t(a))\; \rangle_{\mathbb{C}^p},  \]
for any $f\in \mathcal{V}_p$ and $a\in A$. 
The adjoint operator of $K$ is expressed by 
    \[ (K^*\psi)(u) = \sum_{t(a)=u} \psi(a)w(a) \]
for any $\psi\in \mathcal{A}$ and $u\in V$.
Then we have the important properties of $K$. 
\begin{lem}\label{lem:123}
Let $K$, $S$, $T$, $C$ be defined as the above. We have
\begin{align}
I_{\mathcal{V}_p} &= K^* K  \\
T &= K^* S K \\
C &= (\kappa- \kappa') KK^*+\kappa' I_p
\end{align}
\end{lem}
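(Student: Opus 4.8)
The plan is to verify the three identities by unwinding the definitions of $K$, $K^*$, $S$ and $W$ and exploiting the orthonormality of the systems $\{\alpha_u^{(j)}\}_{j=1}^p$. The logical backbone I would follow is that the first identity says $K$ is an isometry, from which the third identity drops out almost for free via the spectral decomposition of each local coin; the second identity is then a direct arc-by-arc bookkeeping computation. Throughout I use the conventions $(Kf)(a)=\langle w(a),f(t(a))\rangle_{\mathbb{C}^p}$ and $(K^*\psi)(u)=\sum_{t(a)=u}\psi(a)w(a)$, together with $w(a)_j=\overline{\alpha_{t(a)}^{(j)}(a)}$.

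First, for $I_{\mathcal{V}_p}=K^*K$, I would fix $f\in\mathcal{V}_p$, a vertex $u$ and an index $i$, and compute $(K^*Kf)(u)_i=\sum_{t(a)=u}(Kf)(a)\,w(a)_i$. Inserting $(Kf)(a)=\sum_j\alpha_u^{(j)}(a)f(u)_j$ and $w(a)_i=\overline{\alpha_u^{(i)}(a)}$, the coefficient of $f(u)_j$ becomes $\sum_{t(a)=u}\alpha_u^{(j)}(a)\overline{\alpha_u^{(i)}(a)}=\langle\alpha_u^{(i)},\alpha_u^{(j)}\rangle_{\mathbb{C}^{A_u}}=\delta_{ij}$, which is exactly the orthonormality of the CONS. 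Hence $(K^*Kf)(u)=f(u)$, proving the first identity.

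Being an isometry, $K$ satisfies that $KK^*$ is the orthogonal projection onto $\mathrm{ran}(K)$; since the columns of $K$ are the extensions $\tilde{\alpha}_u^{(j)}$, one has $\mathrm{ran}(K)=\oplus_{u\in V}\ker(\kappa-C_u)$, so $KK^*=\oplus_{u\in V}\Pi_u$ with $\Pi_u$ the projection onto $\ker(\kappa-C_u)$ inside $\mathbb{C}^{A_u}$. (Directly, $(KK^*\psi)(a)=\sum_{t(b)=t(a)}\langle w(a),w(b)\rangle\psi(b)$ and $\langle w(a),w(b)\rangle=\sum_j\alpha_{t(a)}^{(j)}(a)\overline{\alpha_{t(b)}^{(j)}(b)}$ is precisely the $(a,b)$-entry of $\Pi_{t(a)}$.) For the third identity I then invoke that each $C_u$ is unitary with $\mathrm{Spec}(C_u)\subseteq\{\kappa,\kappa'\}$ and $\dim\ker(\kappa-C_u)=p$, so by the spectral theorem $C_u=\kappa\Pi_u+\kappa'(I_u-\Pi_u)=(\kappa-\kappa')\Pi_u+\kappa' I_u$; taking the direct sum over $u$ yields $C=(\kappa-\kappa')KK^*+\kappa' I_{\mathcal{A}}$, i.e. the claimed identity (with the identity read on $\mathcal{A}$).

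Finally, for $T=K^*SK$ I would compute $(K^*SKf)(u)$ directly. From the flip-flop shift $(S\psi)(a)=\psi(\bar a)$ we get $(SKf)(a)=(Kf)(\bar a)=\sum_j\alpha_{o(a)}^{(j)}(\bar a)f(o(a))_j$ since $t(\bar a)=o(a)$; applying $K^*$ and reading the $i$-th component gives $\sum_{t(a)=u}\sum_j\overline{\alpha_u^{(i)}(a)}\,\alpha_{o(a)}^{(j)}(\bar a)\,f(o(a))_j=\sum_{t(a)=u}(W(a)f(o(a)))_i$, where I recognise $(W(a))_{i,j}=\overline{\alpha_{t(a)}^{(i)}(a)}\alpha_{o(a)}^{(j)}(\bar a)$. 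Grouping arcs by their origin $v=o(a)$ reproduces $\sum_v\big(\sum_{o(a)=v,\,t(a)=u}W(a)\big)f(v)$, which is the defining (origin-to-terminal) sum of $T$. I expect the genuine content to sit in the third identity, where one needs both that the Gram data $\langle w(a),w(b)\rangle$ assemble exactly into the eigenprojection $\Pi_u$ (completeness of the CONS within $\ker(\kappa-C_u)$) and that a unitary supported on only two eigenvalues is the affine combination $(\kappa-\kappa')\Pi_u+\kappa' I_u$; the first two identities are essentially forced, and the only point requiring care is matching the index/orientation conventions so that the $K^*SK$ computation lands on $T$ rather than its transpose.
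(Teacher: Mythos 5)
Your proof is correct and follows essentially the same route as the paper: orthonormality of the CONS for $K^*K=I$, a direct arc-by-arc computation with the flip-flop shift for $K^*SK=T$, and the spectral decomposition $C_u=(\kappa-\kappa')\Pi_u+\kappa' I_u$ together with the identification of $KK^*$ as the eigenprojection $\oplus_{u}\Pi_u$ for the third identity. The only differences are presentational (element-wise rather than block-matrix computations), and your remark that the final identity on the right-hand side should be read as $I_{\mathcal{A}}$ rather than $I_p$ correctly flags a typo in the paper's statement.
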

\begin{proof}
For the part (1),  we have 
\begin{align*}
    (K^* K)_{u,v} 
    &= \begin{bmatrix} {\tilde{\alpha}_u}^{(1)\;*}  \\ \vdots \\ {\tilde{\alpha}_u}^{(p)\;*}\end{bmatrix}
    \begin{bmatrix}{\tilde{\alpha}_v}^{(1)} & \cdots & {\tilde{\alpha}_v}^{(p)}\end{bmatrix}
    = \delta_{u,v}I_p.
\end{align*}
For the part (2), 
note that 
\[K(\delta_u^{(V)}\otimes I_p)=[(\tilde{w}(a_1))(u),\dots,(\tilde{w}(a_m))(u) ]^\top = \sum_{t(b)=u}\delta_b^{(A)}\otimes w(b)^*, \]
while
\[ (\delta_v^{(V)}\otimes I_p)^* K^*=\sum_{t(a)=v}{\delta_a^{(A)}}^* \otimes w(a), \]
and $(S)_{a,b}=\langle \delta_a^{(A)},S\delta_b^{(A)} \rangle=\delta_{b,\bar{a}}$. Then we have 
\begin{align*}
    (K^* SK)_{u,v} &= \sum_{t(a)=u,\;t(b)=v}  ({\delta_a^{(A)}}^* \otimes w(a)) (S\otimes 1) (\delta_b^{(A)}\otimes w(b)^*) \\
    &= \sum_{t(a)=u,\;t(b)=v}(S)_{a,b}\; w(a)w(b)^*  \\
    &= \sum_{t(a)=u,\;o(a)=v} w(a) w(\bar{a})^* \\
    &= \sum_{t(a)=u,\;o(a)=v}W(a).  
\end{align*}
For the proof of (3), 
since $\mathrm{Spec} (C_u ) =\{ \kappa,\kappa' \} $, it can be written by 
\[
C_u =(\kappa-\kappa') \sum^{p}_{j=1}  \alpha^{(j)}_u {\alpha^{(j)}_u}^*  +\kappa' I.
\]
Because the coin operator $C$ is 
\[
C= \oplus {}_{u \in V} C_u ,
\]
we only need to show that $KK^*$ coincides with the projection operator 
\[ \sum_{u\in V}\sum^{p}_{j=1}  \tilde{\alpha}^{(j)}_u {\tilde{\alpha}^{(j)}_u\;*}. \]
But this is immediately obtained by the definition of $K$.
\end{proof}
\section{Examples: expression of $T$ for the Grover walk with moving shift type on $\mathbb{Z}^d$ ($d\geq 3$)}\label{sect:example}
In this section, we give a matrix expression of the discriminant operator $T$. As we will see later in the next Section~\ref{sect:main}, the eigenvalues of $T$ gives the main part of the those of $U$ by the mapping given by Corollary~\ref{cor:smt}, see also Fig.~\ref{fig:1}. Then first we demonstrate an application of our main theorem Theorem~\ref{thm:main2} to the Grover walk with moving shift type on $\mathbb{Z}^d$. In particular, we show the eigenpolynomial of the time evolution operator in the Fourier space.

Let the set of arcs of $\mathbb{Z}^d$ be \[A=\{(\bs{x},\epsilon j) \;|\; \bs{x}\in\mathbb{Z}^d,\epsilon\in\{\pm1 \},\;j=1,\dots,d\}.\]
The arc $(\bs{x},\epsilon j)$ represents the arc whose terminal vertex is $\bs{x}$ and the origin vertex is $\bs{x}-\epsilon\bs{e}_j$. Here $\bs{e}_j$ is the standard basis of $\mathbb{R}^d$. The quantum coin assigned at each vertex is an operator on $\mathbb{C}^{2d}$ whose computational basis are labeled by $\{1,\dots,2d\}$. So we need to determine a labeling $\zeta_u:\{a\in A \;|\; t(a)=\bs{x}\}\to \{1,\dots,2d\}$ at each vertex which are bijection map. Such a labeling way can be considered innumerably, but in this paper we fix the map by 
    \[ \zeta(\bs{x},\epsilon j):=\zeta_u(\bs{x},\epsilon j) =2j-\frac{1}{2}(1+\epsilon) \]
for any $u\in \mathbb{Z}^d$ which seems to be ``natural". 
It is possible to see both (i) $(\kappa,\kappa')=(-1,1)$ and (ii) $(\kappa,\kappa')=(1,-1)$.  
As we have discussed in Section~\ref{sect:motivation1}, we start the consideration on the moving shift type Grover walk from $U=SC'$, where $S$ is the flip flop shift and $C'=\oplus_{x\in \mathbb{Z}^d}\sigma \mathrm{Gr}(2d)$. 
\subsection{Fourier transform}
The arc set of $\mathbb{Z}^d$ is isomorphic to $\mathbb{Z}^d \times \{\epsilon j \;|\; \epsilon\in \{\pm 1\},\;j=1,\dots,d\}$. Putting $A_o:=\{\epsilon j \;|\; \epsilon\in \{\pm 1\},\;j=1,\dots,d\}$ and $\mathbb{T}=[0,2\pi)$, we define the Fourier transform $\mathcal{F}:\ell^2(\mathbb{Z}^d\times A_o)\to L^2(\mathbb{T}^d\times A_o)$ by 
\[ (\mathcal{F}\psi)(\bs{k},\epsilon j)=\sum_{\bs{x}\in \mathbb{Z}^d} \psi(\bs{x},\epsilon j)e^{i\langle \bs{x},\bs{k} \rangle} \]
for any $\psi\in \ell^2(\mathbb{Z}^d \times A_o)$ and $(\bs{k},\epsilon j)\in \mathbb{T}^d\times A_o$.
Note that $\ell^2(\mathbb{Z}^d \times A_o)$ is the Hilbert space of the Grover walk on $\mathbb{Z}^d$ and the space $L^2(\mathbb{T}^d\times A_o)$ is its Fourier space. 
The inverse Fourier transform is described by 
\[ (\mathcal{F}^{-1}\hat{f})(\bs{x},\epsilon j)=\int_{\bs{k}\in \mathbb{T}^d} \hat{f}(\bs{k},a)e^{-i\langle \bs{k},\bs{x}\rangle} \frac{d\bs{k}}{(2\pi)^d} \]
for any $\hat{f}\in L^2(\mathbb{T}^d\times A_o)$ and $(\bs{k},\epsilon j)\in \mathbb{T}^d\times A_o$. 
Then we obtain that  \[(\mathcal{F}S\mathcal{F}^{-1}\hat{f})(\bs{k},\epsilon j)=e^{i\epsilon k_j}\hat{f}(\bs{k},-\epsilon j) \text{ and } (\mathcal{F}C\mathcal{F}^{-1}\hat{f})(\bs{k},\epsilon j)=\sum_{\ell=1}^{2d}(C_d')_{2j-(1+\epsilon)/2,\;\ell}\;\hat{f}(\bs{k},\ell).\] 
The time evolution operator in the Fourier space is denoted by $\mathcal{F}U\mathcal{F}^{-1}$.  
Putting $\hat{S}(\bs{k})$ as a unitary operator on $\mathbb{C}^{A_o}$ by $(\hat{S}(\bs{k})g)(\epsilon j)=e^{i\epsilon k_j}g(-\epsilon j)$ and  \[\hat{f}(\bs{k}):=[\hat{f}(\bs{k},+1),\hat{f}(\bs{k},-1),\dots,\hat{f}(\bs{k},+d),\hat{f}(\bs{k},-d)]^\top\in \mathbb{C}^{A_o},\] 
we obtain
\[ [\mathcal{F}U\mathcal{F}^{-1}\hat{f}](\bs{k},\epsilon j)=[ \hat{S}(\bs{k})C_d'\hat{f}(\bs{k}) ](2j-(1+\epsilon)/2). \]
We put $\hat{U}(\bs{k}):=\hat{S}(\bs{k})C_d'$. This is the time evolution operator of the twisted quantum walk on the $d$-bouquet graph with the one-form $\theta(\epsilon j)=e^{i\epsilon k_j}$, ($\epsilon j\in A_o$). Since there is only one vertex $o$ in the $d$-bouquet graph, we have $\mathcal{V}_p=\mathbb{C}^{\{o\}}\times \mathbb{C}^p\cong \mathbb{C}^p$ for this twisted quantum walk. See Subsection~\ref{subsect:oneform}.  Let CONS of $\ker(-1-C_d')$ be $\{\alpha_j\}_{j=1}^{d-1}$ while CONS of $\ker(1-C_d')$ be $\{\bs{\beta_j}\}_{j=1}^{d+1}$. 
If $(\kappa,\kappa')=(-1,1)$, then $p=d-1$ and 
the boundary operator for the twisted quantum walk $K_o:\mathbb{C}^{d-1} \to \mathbb{C}^{A_o}$ is reduced to 
\[ K_o=[\alpha_1,\dots,\alpha_{d-1}], \]
while $(\kappa,\kappa')=(1,-1)$, then $p=d+1$ and 
$K_o:\mathbb{C}^{d+1} \to \mathbb{C}^{A_o}$ is reduced to 
\[ K_o=[\beta_1,\dots,\beta_{d+1}]. \]
The discriminant operator of the twisted quantum walk $\hat{T}(\bs{k})$ is expressed by $\hat{T}(\bs{k})=K_o^*\hat{S}(\bs{k})K_o$.    
Let $ \mathcal{U}: \ell^2(\mathbb{Z}^d)\to L^2([0,2\pi))$ be the Fourier transform such that 
\[ (\mathcal{U}f)(\bs{k})=\sum_{\bs{x}\in \mathbb{Z}^d} f(\bs{x})e^{i\langle \bs{x},\bs{k} \rangle}. \]
The inverse is expressed by 
\[ (\mathcal{U}^{-1}\hat{f})(\bs{x})=\int_{\bs{k}\in \mathbb{T}^d} \hat{f}(\bs{k})\frac{d\bs{k}}{(2\pi)^d}.  \]
Note that $K_o=\mathcal{F}K\mathcal{U}^{-1}$ since $\sum_{x\in \mathbb{Z}^d}e^{i\langle \bs{x},  \bs{k}\rangle}/(2\pi)^d=\delta(\bs{k})$.
Then this twisted discriminant operator can be expressed by $\hat{T}(\bs{k})\hat{f}(\bs{k})=(\mathcal{U}T\mathcal{U}^{-1}\hat{f})(\bs{k})$ because
\begin{align*}
    (\mathcal{U}T\mathcal{U}^{-1}\hat{f})(\bs{k})
    &=(\mathcal{U}K^*\mathcal{F}^{-1}(\mathcal{F}S\mathcal{F}^{-1}) \mathcal{F}K\mathcal{U}^{-1}\hat{f})(\bs{k}) \\
    &= K_o^* \hat{S}(\bs{k}) K_o \hat{f}(\bs{k}) \\
    &= \hat{T}(\bs{k})\hat{f}(\bs{k}). 
\end{align*}
By the argument of Subsection~\ref{subsect:oneform}, the spectral information for the twisted quantum walk can be reproduced by that of the spectral information of $\bs{k}=\bs{0}$ in Theorem~\ref{thm:main2} by changing $S=\hat{S}(\bs{0})$ to $\hat{S}(\bs{k})$ and $T=\hat{T}(\bs{0})$ to $\hat{T}(\bs{k})$. 
\subsection{(i) $(\kappa,\kappa')=(-1,1)$ case: }
By Lemma~\ref{lem:Grover}, CONS of $\ker(-1-C'_{d})$, whose dimension is $d-1$, can be described by
\begin{multline*} \frac{1}{\sqrt{2d}}\;[1,\omega,\dots,\omega^{d-1}]^\top\otimes[1\;1]^\top,\;\frac{1}{\sqrt{2d}}\;[1,\omega^2,\dots,\omega^{2(d-1)}]^\top\otimes[1\;1]^\top,\dots\\
\dots,\frac{1}{\sqrt{2d}}\;[1,\omega^{d-1},\dots,\omega^{(d-1)(d-1)}]^\top\otimes[1\;1]^\top, 
\end{multline*}
where $\omega=e^{2\pi i/d}$. 
Therefore the vector $w(\bs{x};\epsilon j)\in \mathbb{C}^{d-1}$ is obtained as follows: 
\begin{equation}\label{eq:omomi}
    w(\bs{x};\epsilon j) = \frac{1}{\sqrt{2d}}\; [\omega^{-(j-1)},\omega^{-2(j-1)},\dots,\omega^{-(d-1)(j-1)} ]^\top
\end{equation}
for $\epsilon\in\{\pm 1\}$ and $j\in\{1,\dots,d\}$. Due to such a labeling, we have $w(\bar{a})=w(a)$. 
Then the matrix weight associated with the moving along the arc $a=(\bs{x},\epsilon j)$ is expressed by 
    \begin{align*} 
    W(\bs{x};\epsilon j) &=  w(a)w(\bar{a})^* \\
    &= \frac{1}{2d}
    \begin{bmatrix}
    1 & \omega^{j-1} & \omega^{2(j-1)} & \dots & \omega^{(d-1)(j-1)} \\
    \omega^{-(j-1)} & 1 & \omega^{j-1} & \dots & \omega^{(d-2)(j-1)} \\
    \omega^{-2(j-1)} & \omega^{-(j-1)} & 1 & \dots & \omega^{(d-3)(j-1)} \\
    \vdots           &\vdots & \vdots & \vdots & \vdots \\  
    \omega^{-(d-1)(j-1)} & \omega^{-(d-2)(j-1)} & \omega^{-(d-3)(j-1)} & \dots & 1
    \end{bmatrix}.
    \end{align*}
For example, for $d=3$, letting $W_{\epsilon x}:=W(\bs{x},\epsilon \bs{e}_x)$, $W_{\epsilon y}:=W(\bs{x},\epsilon \bs{e}_y)$ and $W_{\epsilon z}=W(\bs{x},\epsilon \bs{e}_z)$ $(\epsilon\in\{\pm 1\},\;\bs{x}\in\mathbb{Z}^3)$ be the matrix weights associated with the moving $x$, $y$, $z$ directions, respectively, then we have \[ W_x=W_{-x}= \frac{1}{6}\begin{bmatrix}1 & 1 \\ 1 & 1 \end{bmatrix},\;W_y=W_{-y}=\frac{1}{6}\begin{bmatrix}1 & \omega \\ \omega^{-1} & 1\end{bmatrix},\; W_z=W_{-z}=\frac{1}{6}\begin{bmatrix} 1 & \omega^{-1} \\ \omega & 1 \end{bmatrix}. \]
In the Fourier space, $T$ is deformed by 
\begin{align*}
     \hat{T}(k_x,k_y,k_z)
     &=e^{ik_x}W_x+e^{-ik_x}W_x+e^{ik_y}W_y+e^{-iky}W_y+e^{ikz}W_z+e^{-ikz}W_z\\
     &= 2(\cos k_x W_x+\cos k_y W_y +\cos k_z W_z) 
\end{align*}
for any $k_x,k_y,k_z\in[0,2\pi)$.
Then the solutions of $\det(\mu-\hat{T}) =0$   
are equivalent to the ones of the following quadratic equation: 
\begin{multline}\label{eq:EPT}
    \mu^2-\frac{2}{3}(\cos k_x+\cos k_y +\cos k_z)\mu+\frac{1}{3}(\cos k_x\cos k_y+\cos k_y\cos k_z+\cos k_z\cos k_x)=0. 
\end{multline}
For example, if we consider the $d$-dimensional torus with the size $N$, taking $k_j=2\pi\ell_j/N$ ($j=1,2,3$), we have the eigenvalues of $T$. 
According to our result in Theorem~\ref{thm:main2}, the eigenvalues of the time evolution operator $\hat{U}(k_x,k_y,k_z)$ in the Fourier space restricted to  $\mathcal{L}$ are lifted up to the unit circle as follows: 
\[ \{-e^{\pm i\arccos (\mu)} \;|\; \mu\in \mathrm{Spec}(\hat{T}(k_x,k_y,k_z))\}. \]
Indeed, the eigeneploynomial of $\hat{U}(k_x,k_y,k_z)$ is computed in \cite{KKS} by 
\[ (1-\lambda^2)\left(\lambda^4+\frac{4}{3}\gamma_1\lambda^3+(2+\frac{4}{3}\gamma_2)\lambda^2+\frac{4}{3}\gamma_1\lambda+1\right), \]
where $\gamma_1=\cos k_x+\cos k_y+\cos k_z$ and $\gamma_2=\cos k_x\cos k_y+\cos k_y\cos k_z+\cos k_z\cos k_x$. 
Let us see the second term in the above polynomial.
Taking the product of $\lambda^{-2}$ and switching the signature of $\lambda$; that is, $\lambda\leftrightarrow -\lambda$, the eigenpolynomial in (\ref{eq:EPT}) is reproduced when we put $\mu=(\lambda+\lambda^{-1})/2$. 
The switching the signature of $\lambda$ derives from the spectral map (\ref{eq:mapping}). The reason for the switching the signature derives from $\kappa=-1$.  

Finally, as a by product of considering this example, we obtain the following property of $T$ under a special condition, which seems to correspond to so called  double stochastic matrix or reversible matrix although we need more discussions. 
\begin{pro}
Assume $G$ be a connected $2d$-regular graph. 
Let the quantum walk be the Grover walk with the $p$-multiplicity of eigenvalue $-1$ and $w\in  \mathbb{C}^p$ be the vector from the CONS of the eigenspace of $-1$ described by the RHS of (\ref{eq:omomi}). 
Let the labeling of arcs $\zeta$ at each vertex satisfy $w(a)=w(\bar{a})$ for any $a\in A$. Then we have 
\begin{equation}
    \sum_{u\in V}(T)_{u,v}=\sum_{v\in V}(T)_{u,v}=I_p.
\end{equation}
\end{pro}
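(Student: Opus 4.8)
The plan is to reduce the two block sums over the vertex set to sums of the matrix weights $W(a)$ over arcs sharing a common endpoint, and then to recognise each such arc sum as a diagonal block of $K^*K$, which equals $I_p$ by Lemma~\ref{lem:123}(1). Unfolding $(T)_{u,v}=\sum_{o(a)=v,\,t(a)=u}W(a)$ and summing over the free index, every arc with $t(a)=u$ has a unique origin and every arc with $o(a)=v$ has a unique terminus, so
\[ \sum_{v\in V}(T)_{u,v}=\sum_{t(a)=u}w(a)w(\bar a)^*,\qquad \sum_{u\in V}(T)_{u,v}=\sum_{o(a)=v}w(a)w(\bar a)^*. \]

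Next I would invoke the hypothesis $w(a)=w(\bar a)$, which turns each summand into $w(a)w(a)^*=w(\bar a)w(\bar a)^*$. For the row sum this gives directly
\[ \sum_{v\in V}(T)_{u,v}=\sum_{t(a)=u}w(a)w(a)^*=(K^*K)_{u,u}=I_p, \]
where the middle expression is exactly the diagonal block of $K^*K$ read from the matrix form of $K$ and the last equality is Lemma~\ref{lem:123}(1). For the column sum I would use that the involution $a\mapsto\bar a$ is a bijection from $\{a:o(a)=v\}$ onto $\{b:t(b)=v\}$, under which the summand $w(\bar a)w(\bar a)^*$ becomes $w(b)w(b)^*$, so that
\[ \sum_{u\in V}(T)_{u,v}=\sum_{t(b)=v}w(b)w(b)^*=(K^*K)_{v,v}=I_p. \]

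The computation is short, so the only genuinely delicate point is the bookkeeping in the column sum: there the two factors $w(a)$ and $w(\bar a)^*$ are built from the completely orthonormal systems sitting at the two \emph{different} endpoints of $a$, and it is precisely the symmetry $w(a)=w(\bar a)$ --- which the explicit form (\ref{eq:omomi}) together with the chosen labeling $\zeta$ guarantee, as the $d=3$ weights $W_x,W_y,W_z$ illustrate --- that allows one to reorient every arc and land back on a single-vertex sum covered by $K^*K=I_{\mathcal{V}_p}$. I expect no obstacle beyond verifying these conventions and that the reverse-arc map is indeed a bijection onto the arcs terminating at $v$; the $2d$-regularity and the root-of-unity structure enter only through Lemma~\ref{lem:123}(1), which certifies that the weight vectors $w(a)$ incident to each vertex form a Parseval frame summing to $I_p$.
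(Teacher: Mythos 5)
Your proof is correct and follows essentially the same route as the paper's: both reduce the row and column sums to arc sums $\sum_{t(a)=u}w(a)w(a)^*$ via the hypothesis $w(a)=w(\bar a)$ and the reorientation bijection $a\mapsto\bar a$, and both close with the Parseval identity at each vertex. The only cosmetic difference is that you cite the diagonal block of $K^*K=I_{\mathcal{V}_p}$ from Lemma~\ref{lem:123}(1), whereas the paper re-derives that identity directly from the orthonormality of the $\alpha_u^{(j)}$.
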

\begin{proof}
It is enough to show that 
    \begin{equation}
        \sum_{o(a)=u}W(a) = \sum_{t(a)=u}W(a)= I_p.
    \end{equation}
First, let us see $\sum_{o(a)=u}W(a)=\sum_{t(a)=u}W(a)$. 
    \begin{align*}
        \sum_{o(a)=u} W(a) 
    &= \sum_{o(a)=u} w(a)w(\bar{a})^* \\
    &= \sum_{o(a)=u} w(\bar{a})w(a)^* \\
    &= \sum_{t(a)=u} w(a)w(\bar{a})^* = \sum_{t(a)=u}W(a).
    \end{align*}
Here the second equality derives from the assumption $w(\bar{a})=w(a)$. 
By a direct computation, we have  
\begin{align*}
    \sum_{t(a)=u} W(a) 
    &= \sum_{t(a)=u} w(a)w(\bar{a})^* \\
    &= \sum_{t(a)=u}w(a)w(a)^* =\sum_{t(a)=u} \left[\; \overline{\alpha_{u}^{(i)}(a)}\; \alpha_{u}^{(j)}(a)\;\right]_{i,j=1,\dots,p}  \\
    &= \left[\; \langle \alpha_u^{(j)},\alpha_u^{(i)} \rangle\;\right]_{i,j=1,\dots,p} = \left[\; \delta_{ij}\;\right]_{i,j=1,\dots,p} \\
    &= I_p
\end{align*}
Here we used the assumption of $w(a)$ in the second equality and the fifth equality derives from the orthogonormality of $\alpha_u^{i}$. 
\end{proof}
This property conserves the following quantity :  
    \begin{align*} 
    \sum_{u\in V}(Tf)(u) &= \sum_{u\in V}\sum_{t(a)=u}W(a)f(o(a)) 
    = \sum_{u\in V}\sum_{o(a)=u}W(a)f(o(a)) \\
    &= \sum_{u\in V}f(u).
    \end{align*}
This means $\sum_{u\in V}f_j(u)=\sum_{u\in V}(Tf)_j(u)$ for any $j\in 1,\dots,p$, where $g(u)=[g_1(u),\dots,g_p(u)]^\top$ for $g\in \{f , (Tf)(u)\}$.  
\subsubsection{(ii) $(\kappa,\kappa')=(1,-1)$ case}
By Lemma~\ref{lem:Grover}, CONS of $\ker(1-C'_d)$, whose dimension is $d-1$, can be described by
\begin{equation*} 
\frac{1}{\sqrt{2d}}[1,\dots,1]^\top,
\frac{1}{\sqrt{2}}\;|1\rangle\otimes [1,\;-1]^\top,\;\frac{1}{\sqrt{2}}\;|2\rangle\otimes [1,\;-1]^\top,\dots,\frac{1}{\sqrt{2}}\;|d\rangle \otimes [1,\;1]^\top,
\end{equation*}
where $|j\rangle=\delta_j\in\mathbb{C}^d$. 
Then 
\begin{align*}
    w(\bs{x},1) &= \left[\frac{1}{\sqrt{2d}},\frac{1}{\sqrt{2}},0\dots,0\right]^\top,\;
    w(\bs{x},-1) = \left[\frac{1}{\sqrt{2d}},\frac{-1}{\sqrt{2}},0\dots,0\right]^\top, \\
    w(\bs{x},2) &= \left[\frac{1}{\sqrt{2d}},0,\frac{1}{\sqrt{2}},\dots,0\right]^\top,\;
    w(\bs{x},-2) = \left[\frac{1}{\sqrt{2d}},0,\frac{-1}{\sqrt{2}},\dots,0\right]^\top, \\
    \vdots &  \\
    w(\bs{x},d) &= \left[\frac{1}{\sqrt{2d}},0,0,\dots,\frac{1}{\sqrt{2}}\right]^\top,\;
    w(\bs{x},-d) = \left[\frac{1}{\sqrt{2d}},0,0,\dots,\frac{-1}{\sqrt{2}}\right]^\top.
\end{align*}
Since the inverse arc of $(\bs{x},\epsilon j)$ is $(\bs{x}-\epsilon j,-\epsilon j)$ and the quantum coin is uniformly assigned, 
the matrix weight associated with the motion  along the arc $a=(\bs{x},\epsilon j)$ is 
    \begin{align}
        (W(\bs{x},\epsilon j))_{\ell,m}&= w(\bs{x},\epsilon j)w(\bs{x}-\epsilon\bs{e}_{j},-\epsilon j)^*
        =w(\bs{x},\epsilon j)w(\bs{x},-\epsilon j)^* \notag \\
        &=
        \begin{cases}
        1/(2d) & \text{: $\ell=m=1$,}\\
        -\epsilon/\sqrt{4d} & \text{: $\ell=1$, $m=j+1$,}\\
        \epsilon/\sqrt{4d} & \text{: $\ell=j+1$, $m=1$,} \\
        -1/2 & \text{: $\ell=m=j+1$,} \\
        0 & \text{: otherwise.}
        \end{cases}
    \end{align}
For example, in $d=3$ case, letting $W_{\pm x}:=W(\bs{x},\pm 1)$, $W_{\pm y}:=W(\bs{x},\pm 2)$, $W_{\pm z}:=W(\bs{x},\pm 3)$, we have 
    \begin{align*}
        W_x & = W_{-x}^*=
        \begin{bmatrix}
        1/6 & -1/\sqrt{12} & 0 & 0 \\
        1/\sqrt{12} & -1/2 & 0 & 0 \\
        0 & 0 & 0 & 0 \\
        0 & 0 & 0 & 0 
        \end{bmatrix}, \\
        W_y & = W_{-y}^*=
        \begin{bmatrix}
        1/6 & 0 & -1/\sqrt{12} & 0 \\
        0 & 0 & 0 & 0 \\
        1/\sqrt{12} & 0 & -1/2 & 0 \\
        0 & 0 & 0 & 0 
        \end{bmatrix}, \\
        W_z & = W_{-z}^*=
        \begin{bmatrix}
        1/6 & 0 & 0 & -1/\sqrt{12} \\
        0 & 0 & 0 & 0 \\
        0 & 0 & 0 & 0 \\
        1/\sqrt{12} & 0 & 0 & -1/2 
        \end{bmatrix}.
    \end{align*}
Then, in the Fourier space, $T$ is deformed by 
    \[ \hat{T}(k_x,k_y,k_z)=
    \begin{bmatrix}
    \frac{1}{3}(\cos k_x+\cos k_y+\cos k_z) & \frac{-i}{\sqrt{3}}\sin k_x
 & \frac{-i}{\sqrt{3}}\sin k_y & \frac{-i}{\sqrt{3}}\sin k_z \\
 \frac{i}{\sqrt{3}}\sin k_x & -\cos k_x & 0 & 0 \\
 \frac{i}{\sqrt{3}}\sin k_y & 0 & -\cos k_y  & 0 \\
 \frac{i}{\sqrt{3}}\sin k_z & 0 & 0 & -\cos k_z
 \end{bmatrix}. \]
The $\mathbb{Z}^3$ lattice is the abelian covering of a $3$-bouquet graph. 
Then the dimension of the Fourier space of the time evolution operator $\hat{U}(k_x,k_y,k_z)$ coincides with the number of arcs of a $3$-bouquet graph; that is, $6$.   
We obtain the $4$ eigenvalues of $\hat{T}(k_x,k_y,k_z)$ by 
\begin{multline*}
\mathrm{Spec}(\hat{T}(k_x,k_y,k_z))=\bigg\{\pm 1,\; \\ \frac{-1}{3}(c [k_x]+c[k_y]+c[k_z])\pm  \frac{\sqrt{2}}{6}\sqrt{3+c[2k_x]+c[ 2k_y]+c[2k_z] -2(c[k_x]c[k_y]+2c[k_y]c[k_z]+c[k_z]c[k_x])} \bigg\}.
\end{multline*}
Here $c[k]=\cos k$.
As we will see, such eigenvalues are lifted up to the unit circle in the complex plain as the eigenvalues of $\hat{U}$ by the one-to-two (\ref{eq:mapping}) except $\pm 1$ while  
the $\pm 1$ eigenvalues are lifted up by a one-to-one map by Theorem~\ref{thm:main2}. 
Then all of the $6$ eigenvalues of the time evolution operator $\hat{U}(k_x,k_y,k_z)$ are directly obtained by  $\mathrm{Spec}(\hat{T}(k_x,k_y,k_z))$. 


From the above observation to the $d=3$ case, we can give more general argument. To this end, let us prepare some notations. 
Let $\Gamma:=\{-\frac{1}{d}\sum_{j=1}^d \cos k_j, \cos k_1,\dots, \cos k_d \}$ and $\Gamma^{(0,s)}:=\Gamma\setminus\{ -\frac{1}{d}\sum_{j=1}^d \cos k_j, \cos k_s \}$ $(1\leq s\leq d)$. 
Moreover for a subset $\Omega\subset \Gamma$, we define 
    \[ \gamma_j(\Omega):=\sum_{\Omega'\subset \Omega,\;|\Omega'|=j}\;\;\prod_{\omega\in \Omega'}\omega \]
for $j=0,1,\dots,|\Omega|$. Here we define  $\gamma_0(\Omega)=0$. 
Then we obtain the following proposition.
\begin{pro}\label{pro:ddim}
The real parts of the eigenvalues of the Grover walk on $\mathbb{Z}^d$ with the moving shift in the Fourier space for fixed $\bs{k}\in \mathbb{T}^d$ are the roots of the following polynomial. 
\begin{equation}\label{eq:epT}
    P(x)=x^{d+1}+\eta_1x^d+\cdots+\eta_d x+\eta_{d+1}, 
\end{equation}
where each coefficient is described by 
\begin{align}
    \eta_1 &= \gamma_1(\Gamma), \\
    \eta_j &= \gamma_j(\Gamma)-\frac{1}{d}\sum_{s=1}^d  \gamma_{j-2}(\Gamma^{(0,s)})\;\sin^2 k_s,\;\;(2\leq j \leq d+1)\;. 
\end{align}
In particular, $P(+ 1)=P(-1)=0$, and its muliplicities are simple for any $\bs{k}=(k_1,\dots,k_d)\in \mathbb{T}^d$ except the wave numbers satisfying   
\[ \#\{1\leq j \leq d\;|\; \cos k_j=\mp 1 \} \geq 2, \] respectively. 
\end{pro}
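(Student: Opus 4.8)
The plan is to identify $P$ with the characteristic polynomial $\det(xI_{d+1}-\hat{T}(\bs{k}))$ of the discriminant operator in the case $(\kappa,\kappa')=(1,-1)$, where $p=d+1$, and then to expand the resulting arrowhead determinant. By the spectral map (\ref{eq:mapping}) of Theorem~\ref{thm:main2}, a real eigenvalue $\mu$ of the Hermitian operator $\hat{T}(\bs{k})$ and the eigenvalues $\lambda$ of $\hat{U}(\bs{k})$ lying above it on the unit circle are related by $\mu=(\lambda+\lambda^{-1})/2=\mathrm{Re}\,\lambda$ (here $\kappa=1$, so no sign switch occurs); hence the set of real parts of $\mathrm{Spec}(\hat{U}(\bs{k}))$ coincides with $\mathrm{Spec}(\hat{T}(\bs{k}))$, i.e.\ with the root set of $P$. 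Generalising the displayed $d=3$ matrix, I would first record from $\hat{T}(\bs{k})=K_o^*\hat{S}(\bs{k})K_o$ and the vectors $w(\bs{x},\epsilon j)$ of case (ii) that $\hat{T}(\bs{k})$ is the bordered-diagonal (arrowhead) matrix with $(1,1)$ entry $\tfrac1d\sum_j\cos k_j$, lower diagonal entries $-\cos k_1,\dots,-\cos k_d$, and border $(\hat{T})_{1,j+1}=\overline{(\hat{T})_{j+1,1}}=\tfrac{-i}{\sqrt d}\sin k_j$.

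The second step is purely computational. Applying the standard arrowhead determinant identity gives
\[ P(x)=\Big(x-\tfrac1d\sum_{j=1}^d\cos k_j\Big)\prod_{\ell=1}^d(x+\cos k_\ell)-\frac1d\sum_{s=1}^d\sin^2 k_s\prod_{\ell\neq s}(x+\cos k_\ell). \]
The key observation is that the first product equals $\prod_{\omega\in\Gamma}(x+\omega)=\sum_{j=0}^{d+1}\gamma_j(\Gamma)\,x^{d+1-j}$, because $\Gamma$ is precisely the set of negated diagonal entries $\{-\tfrac1d\sum_j\cos k_j,\cos k_1,\dots,\cos k_d\}$, while $\prod_{\ell\neq s}(x+\cos k_\ell)=\sum_{i=0}^{d-1}\gamma_i(\Gamma^{(0,s)})\,x^{d-1-i}$ since $\Gamma^{(0,s)}=\{\cos k_\ell:\ell\neq s\}$. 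Collecting the coefficient of $x^{d+1-j}$ then yields $\eta_1=\gamma_1(\Gamma)$ (no correction term reaches degree $d$) and $\eta_j=\gamma_j(\Gamma)-\tfrac1d\sum_s\gamma_{j-2}(\Gamma^{(0,s)})\sin^2 k_s$ for $2\le j\le d+1$, which is (\ref{eq:epT}); this is just bookkeeping with elementary symmetric functions.

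For $P(\pm1)=0$, set $s=\pm1$ and use $\sin^2 k_j=1-\cos^2 k_j=(s-\cos k_j)(s+\cos k_j)$, valid because $s^2=1$. At $x=s$ the $j$-th correction summand becomes $(s-\cos k_j)(s+\cos k_j)\prod_{\ell\neq j}(s+\cos k_\ell)=(s-\cos k_j)\prod_\ell(s+\cos k_\ell)$, so summing over $j$ gives $\big(s-\tfrac1d\sum_j\cos k_j\big)\prod_\ell(s+\cos k_\ell)$, which is exactly the first term of $P(s)$; the two cancel and $P(\pm1)=0$.

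Finally, for the multiplicities I would pass to $\dim\ker(sI_{d+1}-\hat{T}(\bs{k}))$ with $s\in\{+1,-1\}$, which equals the multiplicity of $s$ as a root of $P$ since $\hat{T}(\bs{k})$ is Hermitian. Solving $\hat{T}v=sv$ row by row, each lower row yields $v_{j+1}=\tfrac{i}{\sqrt d}\sin k_j\,v_1/(\cos k_j+s)$ whenever $\cos k_j\neq -s$. An index with $\cos k_j=-s$ makes the $(j+1,j+1)$ diagonal entry equal to $s$ and simultaneously annihilates its border $\tfrac{-i}{\sqrt d}\sin k_j$, so $v_{j+1}$ decouples and is free; these give $m:=\#\{j:\cos k_j=-s\}$ independent eigenvectors. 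Substituting the slaved coordinates into the first row and using $|(\hat{T})_{1,j+1}|^2=\tfrac1d(s-\cos k_j)(s+\cos k_j)$ leaves $v_1\big[(\tfrac1d\sum_j\cos k_j-s)+\tfrac1d\sum_{\cos k_j\neq -s}(s-\cos k_j)\big]=0$, and the bracket telescopes to $-\tfrac{2s}{d}\,m$, which vanishes exactly when $m=0$. Thus $v_1$ supplies one extra dimension only when $m=0$, so the multiplicity of $s$ is $\max(1,m)$, which is simple precisely unless $\#\{j:\cos k_j=-s\}\ge2$. I expect this bracket evaluation, together with the clean split into resonant and non-resonant coordinates, to be the main obstacle, since it is exactly where the degeneracy threshold is pinned down.
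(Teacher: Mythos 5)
Your proof is correct, and for most of the proposition it follows the paper's own route: both arguments identify $P(x)$ with $\det(xI_{d+1}-\hat{T}(\bs{k}))$ in the case $(\kappa,\kappa')=(1,-1)$, use Theorem~\ref{thm:main2} (where indeed $\cos\phi=\mu$, with no sign switch) to read the roots as the real parts of $\mathrm{Spec}(\hat{U}(\bs{k}))$, evaluate the arrowhead determinant by the same Schur-complement identity, and verify $P(\pm1)=0$ via the factorization $\sin^2 k_j=(s-\cos k_j)(s+\cos k_j)$ with $s^2=1$ (the paper precedes this by a dimension-count contradiction argument, which you rightly omit as redundant once the direct verification is done). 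Incidentally, your symmetric-function bookkeeping uses the empty-product convention $\gamma_0=1$, which is what the formula for $\eta_2$ actually requires; the paper's stated convention $\gamma_0(\Omega)=0$ is a slip, and your version is the consistent one. Where you genuinely diverge is the multiplicity claim: the paper differentiates the characteristic polynomial and shows $\frac{\partial}{\partial x}\det(x-\hat{T}(\bs{k}))\big|_{x=\pm1}\geq 0$ with equality exactly when $\#\{j\;:\;\cos k_j=\mp 1\}\geq 2$, whereas you compute $\dim\ker(\pm1-\hat{T}(\bs{k}))$ by solving the arrowhead eigenvector equations row by row and invoke Hermiticity to equate geometric and algebraic multiplicity. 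Your route buys more: the telescoping of the first-row bracket to $-\frac{2s}{d}\,m$ with $m=\#\{j\;:\;\cos k_j=-s\}$ (which I checked and is exact) yields the precise multiplicity $\max(1,m)$, not merely the simplicity threshold, and it cleanly explains the degeneracy mechanism (resonant coordinates decoupling when $\sin k_j=0$); the paper's derivative computation, in exchange, stays entirely at the level of the polynomial and needs no spectral-theory input beyond the determinant formula.
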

\begin{proof}
The discriminant operator in the Fourier space $\hat{T}(\bs{k})$ $(\bs{k}=(k_1,\dots,k_d))$ is expressed by 
\begin{equation}\label{eq:dom}
    \hat{T}(\bs{k})=
    \begin{bmatrix}
    \frac{1}{d}\sum_{j=1}^d \cos k_j & \frac{-i}{\sqrt{d}} \sin k_1 & \frac{-i}{\sqrt{d}}\sin k_2 & \cdots & \frac{-i}{\sqrt{d}}\sin k_d \\
    \frac{i}{\sqrt{d}}\sin k_1 & -\cos k_1 & 0 & \cdots & 0 \\
    \frac{i}{\sqrt{d}}\sin k_2 & 0 & -\cos k_2 & \cdots & 0 \\
    \vdots & \vdots & \vdots & \ddots & \vdots \\
    \frac{i}{\sqrt{d}}\sin k_d & 0 & 0 & \dots & -\cos k_d
    \end{bmatrix}.
\end{equation}
This is a selfadjoint operator on the star graph with self loops whose center vertex is labeled by $0$ and the leaves are labeled by $1,\dots,d$. 
Let us put $a_0=(1/d)\sum_{j=1}^d \cos k_j$, $c_j=-\cos k_j$ $(j=1,\dots,d)$ and  $D=\mathrm{diag}(c_1,\dots,c_d)$ which is a diagonal matrix, and $\bs{v}=[(i/\sqrt{d})\sin k_1,\dots,(i/\sqrt{d})\sin k_d]^\top$. 
Then we have
\begin{align}\label{eq:ep}
\det(x-\hat{T}(\bs{k})) &= \det\left( \begin{bmatrix} x-a_0 & -\bs{v}^* \\ -\bs{v} & x I-D \end{bmatrix} \right) \notag\\
&= \det\left(\begin{bmatrix} 1 & -\bs{v}^* \\ 0 & x I-D \end{bmatrix}\begin{bmatrix} (x-a_0)-\bs{v}^* (x I-D)^{-1}\bs{v} & 0 \\ -(x I-D)^{-1}\bs{v} & I \end{bmatrix}\right) \notag\\
&=\det(x I-D)\; \left(x-a_0-\langle \bs{v}, (x I-D)^{-1}\bs{v}\rangle\right) \notag\\
&= (x-c_1)\cdots(x-c_d)\left\{x- a_0-\frac{\sin^2 k_1}{d} \frac{1}{x-c_1}-\cdots-\frac{\sin^2 k_d}{d} \frac{1}{x-c_d} \right\} \notag\\
&= (x-a_0)(x-c_1)\cdots (x-c_d) - \frac{1}{d}\sum_{j=1}^d \sin^2k_j(x-c_1)\overset{j}{\check{\cdots}}(x-c_d). 
\end{align}

By Theorem~\ref{thm:main2}, any root of $P(x)$; $\mu\neq \pm 1$, is lifted up to the unit circle in the complex plain $e^{\pm i\arccos \mu}$ as the eigenvalues of $\hat{U}(\bs{k})$.
On the other hand, if $\pm 1$ are the root of $P(x)$ with the multiplicities $m_{\pm}$, then $\pm 1$ are also the eigenvalues of $\hat{U}(\bs{k})$ with the same multiplicities by Theorem~\ref{thm:main2}.  
Let us assume $\pm 1$ are not the root of $P(x)$. Then we obtain totally $2(d+1)$ eigenvalues of $\hat{U}(\bs{k})$, which is a contradiction because the dimension of the acting space of $\hat{U}(\bs{k})$ is $2d$. Thus at least $P(x)$ has $+1$ or $-1$ as the root.  Now to observe it more precisely, let us insert $x=1$ into (\ref{eq:ep}). 
Then we have 
\begin{align*}
    \det(1-\hat{T}(\bs{k})) &= (1-a_0)(1-c_1)\cdots(1-c_d)-\frac{1}{d}\sum_{j=1}^d (1-c_j^2)(1-c_1)\overset{j}{\check{\cdots}}(1-c_d)\\
    &= \frac{1}{d}(1-c_1)\cdots (1-c_d)\left\{d(1-a_0)-\sum_{j=1}^d (1+c_j)\right\} \\
    &= 0.
\end{align*} 
Here we used $\sin^2 k_j=1-c^2_j$ in the first equality and the definition of $a_0$ in the last equality, respectively. 
In the same way, inserting $x=-1$ into (\ref{eq:ep}), we obtain $\det(-1-\hat{T})=0$. 
Then the roots of $P(x)$ include both $+1$ and $-1$. 

Finally, let us proof the multiplicities of $x=\pm 1$. 
It is enough to clarify when the $r$-th derivative $(\partial/\partial x)^r \det(x-\hat{T}(\bs{k}))$ at $x=\pm1$ degenerates to $0$. To this end, let us start the derivative of (\ref{eq:ep}) and put $Q_j(x)=(x-c_1)\overset{j}{\check{\cdots}} (x-c_d)$.
Then we have 
\begin{align*}
    \frac{\partial}{\partial x}\det(x-\hat{T}(\bs{k}))\bigg|_{x=1} &= \frac{\partial}{\partial x}\left\{ (x-a_0)(x-c_1)\cdots (x-c_d) - \frac{1}{d}\sum_{j=1}^d \sin^2k_j(x-c_1)\overset{j}{\check{\cdots}}(x-c_d)\right\}\bigg|_{x=1} \\
    &= (1-c_1)\cdots(1-c_d)+(1-a_0)\sum_{j=1}^d Q_j(1)
    -\frac{1}{d}\sum_{j=1}^d (1+c_j)\sum_{\ell\neq j}Q_{\ell}(1) \\
    &= (1-c_1)\cdots(1-c_d)+(1-a_0)\sum_{j=1}^d Q_j(1)
    -\sum_{\ell=1}^d Q_{\ell}(1)\left((1-a_0)-\frac{1}{d}(1+c_\ell)\right) \\
    &= (1-c_1)\cdots(1-c_d)+\frac{1}{d}\sum_{\ell=1}^d Q_{\ell}(1)(1+c_\ell)\geq 0.
\end{align*}
The equality holds iff $Q_\ell(1)=0$ for any $1\leq \ell\leq d$. This condition is equivalent to $\#\{j\;|\; c_j=1\}\geq 2$. 
Then we have 
    \[\frac{\partial}{\partial x}\det(x-\hat{T}(\bs{k}))\bigg|_{x=1}= 0\]
iff $\bs{k}\in \delta F_s:=\{(k_1,\dots,k_d) \;|\;\#\{ j \;|\; c_j=1\}=s\}$ for $2\leq s\leq d$. This means the multiplicity $m_1$ is at least $2$. 
 
The multiplicity at $x=-1$ can be shown in a similar way.
\end{proof}

\begin{rem}
Let us consider the Grover walk with the flip flop shift on $\mathbb{Z}^d$. According to Proposition~3 in \cite{HKSS}, the eigenequaion of the time evolution in the Fourier space is described by 
    \[ (1-\lambda)^{d-1}(1+\lambda)^{d-1}\left(\lambda^2-\frac{2}{d}\sum_{j=1}^d\cos k_j\lambda+1\right), \]
if $(k_1,\dots,k_d)\notin (\mathbb{Z}\pi)^d$. 
This means that the discriminant operator coincides with the Fourier transform of the time evolution operator of the underlying random walk $(1/d)\sum_{j=1}^d \cos k_j$. 
The time evolution operator appears also as the $(1,1)$-element of the discriminant matrix of the moving shift QW in (\ref{eq:dom}). 
The multiplicities of $(\pm 1)$ are $(d-1)$ in the flip flop shift type while that are simple in the moving shift type. 
The density of the eigenspace which gives the localization in the real space for the flip flop shift type is higher than that for the moving shift type.  
Combining the result on \cite{KT}, we obtain that the eigenspaces for the localization are represented by each ``rectangle" of $\mathbb{Z}^d$ for the flip flop shift type while by each unit $d$-dimensional cube for the moving shift type in $\mathbb{Z}^d$. See Table~\ref{table:1}. 
\end{rem}

\begin{rem}
Let us see that the simplicity of $\pm 1$ solutions in (\ref{eq:EPT}) implies the no-existence of $\mathcal{L}^\perp$ which is defined in Sect.~\ref{sect:main} for almost all $\bs{k}\in \mathbb{T}^d$ except the boundary $\cup_{s\geq 2}\;\delta F_{s}^{(\pm)}$.  
Let $m_\pm$ be the multiplicities of the $\pm 1$ solutions of eigenpolynomial (\ref{eq:EPT}), respectively. 
Theorem~(\ref{thm:main2}) tells us that the eigenvalues except $\pm 1$ are lifted up to upper and down circles of the unit circle in the complex plane. Then one-eigenvalue between $(-1,1)\in\mathbb{R}$ of $\hat{T}$ generates two eigenvalues of $\hat{U}(\bs{k})$. On the other hand, the eigenvalues $\pm 1$ of $\hat{T}(\bs{k})$ are lifted up to those of  $\hat{U}(\bs{k})$ by a one-to-one map. Since the dimension of $\hat{U}(\bs{k})$ is $2d$ while the number of solutions of (\ref{eq:EPT}) are $d+1$, the dimension analysis leads to
\[ 2d\geq 2(d+1-m_+-m_-)+(m_+ + m_-) \]
which is equivalent to $m_+ + m_-\geq 2$. 
Here the inequality derives from a possibility of the existance of $\mathcal{L}^\perp$. 
However we have already known that $m_{\pm}=1$ by Proposition~\ref{pro:ddim} when $\bs{k}\notin \cup_{s\geq 2}\delta F_{s}^{(\pm)}$, then the equality holds. This is the reason for $\mathcal{L}^\perp =0$. 
Thus interestingly, considering the self adjoint operator on the star graph in the Fourier space denoted by (\ref{eq:dom}) is essentially same as considering the Grover walk on $\mathbb{Z}^d$ with the moving shift type. 
\end{rem}
\begin{table}[h]
    \centering
    \begin{tabular}{c|c|c}
         &  flip flop & moving shift \\ \hline 
    $d=1$& $0$ (free walk), $B_1$ & $2$ (zigzag walk), $S_1$\\
    $d=2$& $2$, $B_2$ & $2$, $S_2$ \\
    $d=3$& $4$, $B_3$ & $2$, $S_3$ \\
    $d=4$& $6$, $B_4$ & $2$, $S_4$ \\
    \vdots & \vdots & \vdots  
    \end{tabular}
    \caption{The multiplicities of $(\pm 1)$ for QWs with flip flop and moving shift types and the quotient graph. Here $B_d$ is the $d$-bouquet graph and $S_d$ is the star graph with $d$-claws and self loops for $\bs{k}\notin \cup_{s\geq 2}\delta F_{s}^{(\pm)}$. }
    \label{table:1}
\end{table}
\begin{rem}
Let us consider an exceptional case of $\bs{k}\in  \delta F_{d}^{(+)}$; that is, $\bs{k}=(0,\dots,0)$. 
The eigenpolynomial of  $\hat{T}(0,\dots,0)$ is reduced to 
\[ \det(x-\hat{T}(0,\dots,0))=(x-1)(x+1)^d. \]
Then we have 
\begin{align*}
    \dim(\ker(1-\hat{U}(0,\dots,0)|_{\mathcal{L}})) &=\dim(\ker(1-\hat{T}(0,\dots,0))= m_+=1, \\
    \dim(\ker(-1-\hat{U}(0,\dots0)|_{\mathcal{L}}))
    &=\dim(\ker(-1-\hat{T}(0,\dots,0))=m_-= d. 
\end{align*}
Note that $\kappa'=-1$ in Corollary~\ref{cor:dimensions}. Then Corollary~\ref{cor:dimensions} part 3 implies 
\begin{align*}
    \dim(\ker(-1-\hat{U}(0,\dots,0)|_{\mathcal{L}^\perp})) &=|E(B_d)|-p|V(B_d)|+m_-=d-(d+1)\times 1+d=d-1, \\
    \dim(\ker(1-\hat{U}(0,\dots,0)|_{\mathcal{L}^\perp})) &=|E(B_d)|-p|V(B_d)|+m_+=d-(d+1)\times 1+1=0. 
\end{align*}
Thus totally we have $\mathrm{Spec}(\hat{U}(0,\dots,0))=\{\{1\}^1,\{-1\}^{2d-1}\}$. 
Indeed, since 
\[\hat{U}(0,\dots,0)=\hat{S}(0)\sigma \mathrm{Gr}(2d)=\sigma^2 \mathrm{Gr}(2d)=\mathrm{Gr}(2d),\] 
the consistency can be confirmed by the spectrum of $2d$-dimensional Grover matrix. 
The $d-1$ eigenvalues exfoliate from the eigenvalue $-1$ by the splitting at the parameter $\bs{k}=(0,\dots,0)$~\cite{Kato1982}.   
\end{rem}
\section{Spectral information}
\subsection{Outline of the spectral information}
\begin{thm}\label{thm:zeta}
Let $G$ be a connected graph $n$ vertices and $m$ edges, $U=SC$ the the evolution matrix of a coined quantum walk on $G$. 
Suppose that $\mathrm{Spec} (C_u)= \{\kappa,\kappa' \}$. 
Set $1\leq  p\leq \delta(G)$. 
Assume $p= \dim \ker(\kappa-C_u)$ for any $u\in V$. 
Then, for the unitary matrix $U=SC$, we have   
\[
\det ( \lambda {\bf I}_{2|E|} -U)=(\lambda {}^2 -\kappa'^2 )^{|E|-p|V|} \det (\lambda {}^2 -(\kappa-\kappa')T\lambda-\kappa\kappa' ).  
\]
\end{thm}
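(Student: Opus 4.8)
The plan is to collapse the $2|E|$-dimensional characteristic determinant of $U$ onto the $p|V|$-dimensional space $\mathcal{V}_p$, using the three identities of Lemma~\ref{lem:123} together with the Weinstein--Aronszajn (matrix determinant) lemma. First I would expand, via the relation $C = (\kappa-\kappa')KK^* + \kappa' I$, to write
\[ U = SC = (\kappa-\kappa')\,SKK^* + \kappa'\,S, \]
so that
\[ \lambda I_{\mathcal{A}} - U = (\lambda I - \kappa' S) - (\kappa-\kappa')\,(SK)\,K^*. \]
The second term is a perturbation of rank at most $p|V|$ of an explicitly invertible operator, and this is exactly what lets me peel off a $p|V|\times p|V|$ determinant.

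The first key observation is that the flip-flop shift is an involution, $S^2 = I_{\mathcal{A}}$, acting as $|E|$ disjoint transpositions $a\leftrightarrow\bar a$; hence it has eigenvalues $\pm 1$, each with multiplicity $|E|$. Consequently $\lambda I - \kappa' S$ is invertible for generic $\lambda$, with
\[ \det(\lambda I - \kappa' S) = (\lambda^2 - \kappa'^2)^{|E|}, \qquad (\lambda I - \kappa' S)^{-1} = \frac{\lambda I + \kappa' S}{\lambda^2-\kappa'^2}. \]
This already produces the prefactor $(\lambda^2-\kappa'^2)^{|E|}$; the competing power $(\lambda^2-\kappa'^2)^{-p|V|}$ will emerge from the reduced determinant.

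Next I would apply the identity $\det(X + \tilde A\tilde B) = \det(X)\det(I + \tilde B X^{-1}\tilde A)$ with $X = \lambda I - \kappa' S$, $\tilde A = -(\kappa-\kappa')SK$, and $\tilde B = K^*$, which yields
\[ \det(\lambda I - U) = (\lambda^2-\kappa'^2)^{|E|}\,\det\!\Big(I_{\mathcal{V}_p} - (\kappa-\kappa')\,K^*(\lambda I - \kappa' S)^{-1}SK\Big). \]
The inner operator is then computed entirely from Lemma~\ref{lem:123}: substituting the resolvent and using $K^*SK = T$, $S^2 = I$, and $K^*K = I_{\mathcal{V}_p}$,
\[ K^*(\lambda I - \kappa' S)^{-1}SK = \frac{\lambda\,K^*SK + \kappa'\,K^*K}{\lambda^2-\kappa'^2} = \frac{\lambda T + \kappa' I_{\mathcal{V}_p}}{\lambda^2-\kappa'^2}. \]

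Finally I would collect terms: factoring $(\lambda^2-\kappa'^2)^{-1}$ out of the $p|V|\times p|V|$ determinant and using the simplification $(\lambda^2-\kappa'^2) - (\kappa-\kappa')\kappa' = \lambda^2 - \kappa\kappa'$, the reduced factor becomes
\[ \det\!\Big(I - (\kappa-\kappa')\,\tfrac{\lambda T + \kappa' I}{\lambda^2-\kappa'^2}\Big) = (\lambda^2-\kappa'^2)^{-p|V|}\,\det\!\big(\lambda^2 I - (\kappa-\kappa')\lambda T - \kappa\kappa' I\big), \]
and multiplying by the prefactor gives the claim. I do not anticipate a genuine obstacle here: the argument is a direct algebraic manipulation, and the only points demanding care are verifying the multiplicity-$|E|$ spectrum of $S$ (so the prefactor exponent is exactly $|E|$) and tracking the $\kappa,\kappa'$ bookkeeping in the final simplification that collapses the resolvent into the clean quadratic $\lambda^2 - (\kappa-\kappa')T\lambda - \kappa\kappa'$. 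Note that self-adjointness of $T$ is never invoked, so the identity holds for arbitrary distinct $\kappa,\kappa'$, consistent with the hypotheses of the statement.
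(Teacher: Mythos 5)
Your proposal is correct and follows essentially the same route as the paper: both rest on the decomposition $C=(\kappa-\kappa')KK^*+\kappa' I$ from Lemma~\ref{lem:123}, the Sylvester/Weinstein--Aronszajn identity $\det(I-\tilde A\tilde B)=\det(I-\tilde B\tilde A)$, and the explicit resolvent $(\lambda I-\kappa' S)^{-1}=(\lambda I+\kappa' S)/(\lambda^2-\kappa'^2)$ to collapse the determinant onto $\mathcal{V}_p$. The only (cosmetic) difference is that you work directly with $\det(\lambda I-U)$, whereas the paper manipulates $\det(I-zU)$ and substitutes $z=1/\lambda$ at the end.
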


\begin{proof}  Let us put $c=\kappa-\kappa'$, $m=|E|$ and $q=p|V|$. At first, we have  
\[
\begin{array}{rcl}
\  &   & \det ( {\bf I}_{2m} -z U)= \det ( {\bf I}_{2m} -z SC) \\ 
\  &   &                \\ 
\  & = & 
\det ({\bf I}_{2m} -z S((\kappa-\kappa') K \ K^* +\kappa' {\bf I}_{2m} )) \\ 
\  &   &                               \\ 
\  & = & 
\det ( {\bf I}_{2m} -\kappa'zS -cz SK \ K^* ) \\ 
\  &   &                \\ 
\  & = & \det ( {\bf I}_{2m} - cz SK \ K^* ( {\bf I}_{2m} -\kappa'zS )^{-1} ) \det ( {\bf I}_{2m} -\kappa'zS ) . 
\end{array}
\]

But, if ${\bf A}$ and ${\bf B}$ are an $m \times n $ and $n \times m$ 
matrices, respectively, then we have 
\[
\det ( {\bf I}_{m} - {\bf A} {\bf B} )= 
\det ( {\bf I}_n - {\bf B} {\bf A} ) . 
\]
Thus, we have 
\[
\det ( {\bf I}_{2m} -\kappa'zS )=(1-{\kappa'}^2 z^2 )^m . 
\]
Furthermore, we have 
\[
( {\bf I}_{2m} -\kappa'zS)^{-1} = \frac{1}{1-{\kappa'}^2 z^2 } ({\bf I}_{2m} +\kappa'zS ) . 
\]

Therefore, it follows that 
\[
\begin{array}{rcl}
\  &   & \det ( {\bf I}_{2m} -z U) \\ 
\  &   &                \\ 
\  & = & (1-{\kappa'}^2 z^2 )^{m} 
\det ( {\bf I}_{2m} - \frac{cz}{1-{\kappa'}^2 z^2 } SK \ K^* \ ( {\bf I}_{2m} +\kappa'zS )) \\ 
\  &   &                \\ 
\  & = & (1-{\kappa'}^2 z^2 )^{m} 
\det ( {\bf I}_{q} - \frac{cz}{1-{\kappa'}^2 z^2} K^* ( {\bf I}_{2m} +\kappa'zS) SK) \\ 
\  &   &                \\ 
\  & = & (1-{\kappa'}^2 z^2 )^{m} 
\det ( {\bf I}_{q} - \frac{cz}{1-{\kappa'}^2 z^2} K^* \ SK - \frac{\kappa'cz^2}{1-{\kappa'}^2 z^2 } K^* K ) \\ 
\  &   &                \\ 
\  & = & (1-{\kappa'}^2 z^2 )^{m-q} \det ((1-\kappa\kappa' z^2 ) {\bf I}_{q} -cz T ) . 
\end{array}
\]
Here we used Lemma~\ref{lem:123} in the last equality. 
 Let $u=1/ \lambda $. 
Then, we have 
\[
\det ({\bf I}_{2m} -1/  \lambda U)=(1-\kappa'^2 / \lambda {}^2 )^{|E|-q} \det ((1-\kappa\kappa'/ \lambda {}^2 ) {\bf I}_q - (\kappa-\kappa')/\lambda T) ,  
\]
and so, 
\[
\det ( \lambda  {\bf I}_{2m} -U)=( \lambda {}^2 -\kappa'^2 )^{m-q} \det (( \lambda {}^2 -\kappa\kappa') {\bf I}_q -(\kappa-\kappa') \lambda T) . 
\]
\end{proof}
 
Therefore from Theorem~\ref{thm:zeta}, an outline of how the eigenvalues of $U$ is obtained from $T$. 
\begin{cor}\label{cor:outlineeigensystem}
Let $G$ be a connected with $n$ vertices and $m$ edges. 
Then, the spectra of the unitary matrix $U=SC$ are given as follows: 
\begin{enumerate} 
\item $2p|V|$ eigenvalues: 
\begin{equation}\label{eq:specmap0}
\lambda = \frac{(\kappa-\kappa') \mu \pm \sqrt{ (\kappa-\kappa')^2 \mu {}^2 +4\kappa\kappa'}}{2} , \ \mu \in \mathrm{Spec} (T) ; 
\end{equation}
\item The rest eigenvalues are $\pm \kappa'$ with the same multiplicity $|E|-p|V|$. 
\end{enumerate} 
\end{cor}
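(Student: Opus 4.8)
The plan is to read off both parts of the corollary directly from the factored characteristic polynomial supplied by Theorem~\ref{thm:zeta}, namely
\[
\det(\lambda I_{2|E|} - U) = (\lambda^2 - \kappa'^2)^{|E|-p|V|}\,\det\big((\lambda^2 - \kappa\kappa')I_{p|V|} - (\kappa-\kappa')\lambda T\big).
\]
The only genuine work is to convert the second determinant factor, which is the determinant of a matrix depending polynomially on $\lambda$, into an explicit product of scalar quadratics indexed by $\mathrm{Spec}(T)$; the quadratic formula then produces part (1), and the first factor produces part (2) by inspection.

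First I would invoke that $T$ is self-adjoint on $\mathcal{V}_p$, a space of dimension $p|V|$. By the spectral theorem, $T$ admits an orthonormal eigenbasis with real eigenvalues $\mu_1,\dots,\mu_{p|V|}$, listed with multiplicity. In this eigenbasis the matrix $(\lambda^2-\kappa\kappa')I_{p|V|}-(\kappa-\kappa')\lambda T$ is diagonal, so its determinant factors as
\[
\det\big((\lambda^2-\kappa\kappa')I_{p|V|}-(\kappa-\kappa')\lambda T\big)=\prod_{\mu\in\mathrm{Spec}(T)}\big(\lambda^2-(\kappa-\kappa')\mu\lambda-\kappa\kappa'\big),
\]
each $\mu$ appearing according to its multiplicity as an eigenvalue of $T$.

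Next, for each $\mu\in\mathrm{Spec}(T)$ I would solve the scalar quadratic $\lambda^2-(\kappa-\kappa')\mu\lambda-\kappa\kappa'=0$ by the quadratic formula, which yields exactly the two roots displayed in (\ref{eq:specmap0}). Since $T$ contributes $p|V|$ eigenvalues (with multiplicity) and each contributes two roots of $U$, this accounts for $2p|V|$ eigenvalues and establishes part (1). The remaining factor $(\lambda^2-\kappa'^2)^{|E|-p|V|}$ contributes the roots $\lambda=\pm\kappa'$, each with multiplicity $|E|-p|V|$, giving part (2). A degree count confirms consistency: $2p|V|+2(|E|-p|V|)=2|E|$, which matches the size of $U$.

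There is essentially no analytic obstacle; the result is a direct consequence of Theorem~\ref{thm:zeta} together with the spectral theorem for the self-adjoint operator $T$. The single bookkeeping point worth flagging is that the two root families need not be disjoint: substituting $\lambda=\pm\kappa'$ into the quadratic for a given $\mu$ shows that $\lambda=-\kappa'$ arises from the eigenvalue $\mu=1$ and $\lambda=\kappa'$ from $\mu=-1$. Consequently, when $\pm 1\in\mathrm{Spec}(T)$ the two lists in the statement overlap, and the enumeration should be understood as a multiset union so that multiplicities add correctly; this coincidence is precisely the phenomenon analysed in the finer mapping results of Section~\ref{sect:main}.
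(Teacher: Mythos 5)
Your proof is correct and takes essentially the same route as the paper: invoke Theorem~\ref{thm:zeta}, factor the second determinant as a product of quadratics over $\mathrm{Spec}(T)$ (counted with multiplicity), solve each quadratic for part (1), and read off $\pm\kappa'$ from the first factor for part (2); your version is merely more explicit than the paper's, and you solve the correct quadratic $\lambda^2-(\kappa-\kappa')\mu\lambda-\kappa\kappa'=0$ where the paper's text contains a typo ($+1$ in place of $-\kappa\kappa'$). Your closing observation that the two root families overlap when $\pm 1\in\mathrm{Spec}(T)$ is precisely the refinement the paper postpones to Theorem~\ref{thm:main2}.
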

\begin{proof}
By Theorem~\ref{thm:zeta}, we have 
\[
\begin{array}{rcl}
\  &   & \det ( \lambda  {\bf I}_{2|E|} -U) \\
\  &   &                \\ 
\  & = & ( \lambda {}^2 -\kappa'^2 )^{|E|-p|V|} 
\prod_{ \mu \in Spec (T)} ( \lambda {}^2 -(\kappa-\kappa') \mu \lambda -\kappa\kappa'), 
\end{array}
\]
and solving $ \lambda {}^2 - (\kappa-\kappa')\mu \lambda +1=0$, we obtain (\ref{eq:specmap0}). 
\end{proof}
We will see how this statement can be refined in general, in the next section. 
However we emphasise that the method treated in this section is quite useful to make a quick view of an outline of the spectral information of such kind of quantum walks.  

\subsection{Detailed spectral information}\label{sect:main}
Let $\mathcal{L}:= K \mathcal{V}_p+SK \mathcal{V}_p\subset \mathcal{A}$. 
We call this subspace the inherited subspace. The inherited subspace is the range of the map $L:\mathcal{V}_p\otimes \mathcal{V}_p \to \mathcal{A}$ such that $L(f\oplus g)=Kf+SKg$; that is, $\mathcal{L}=L(\mathcal{V}_p\oplus \mathcal{V}_p)$. 
\begin{lem}\label{lem:Lambda}
The subspace $\mathcal{L}$ is invariant under the action of $U$, that is, $U(\mathcal{L})=\mathcal{L}$. 
\end{lem}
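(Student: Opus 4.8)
The plan is to verify the inclusion $U\mathcal{L}\subseteq\mathcal{L}$ on the two families of generators $Kf$ and $SKg$ (with $f,g\in\mathcal{V}_p$) and then upgrade it to the equality $U\mathcal{L}=\mathcal{L}$ by a dimension argument. The whole computation rests on the three identities of Lemma~\ref{lem:123}, together with the elementary fact that the flip-flop shift is an involution, $S^2=I$, since $(S^2\psi)(a)=\psi(\bar{\bar a})=\psi(a)$ for every arc $a$.

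First I would apply $U=SC$ to a vector of the form $Kf$. Writing $C=(\kappa-\kappa')KK^*+\kappa' I$ and using $K^*K=I_{\mathcal{V}_p}$ gives $CKf=(\kappa-\kappa')K(K^*K)f+\kappa' Kf=\kappa Kf$; in other words, the range of $K$ is precisely the $\kappa$-eigenspace of $C$. Hence $UKf=SCKf=\kappa\,SKf\in SK\mathcal{V}_p\subset\mathcal{L}$.

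Next I would treat a vector of the form $SKg$. Using again $C=(\kappa-\kappa')KK^*+\kappa'I$ together with $K^*SK=T$ yields $CSKg=(\kappa-\kappa')K(K^*SK)g+\kappa'SKg=(\kappa-\kappa')K(Tg)+\kappa'SKg$. Applying $S$ and invoking $S^2=I$ then gives $U(SKg)=SCSKg=(\kappa-\kappa')SK(Tg)+\kappa'Kg$, which lies in $K\mathcal{V}_p+SK\mathcal{V}_p=\mathcal{L}$. Since the images of $K$ and $SK$ span $\mathcal{L}$ by definition, this establishes $U\mathcal{L}\subseteq\mathcal{L}$.

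Finally, to pass from inclusion to equality I would use unitarity: $U$ is an isometry on the finite-dimensional space $\mathcal{A}$, hence injective and dimension-preserving, so $U\mathcal{L}\subseteq\mathcal{L}$ with $\dim(U\mathcal{L})=\dim\mathcal{L}$ forces $U\mathcal{L}=\mathcal{L}$. I do not expect a genuine obstacle: the computation is a direct consequence of Lemma~\ref{lem:123} and $S^2=I$, and the only step deserving a moment of care is this last one, namely insisting on the equality of subspaces rather than merely an inclusion, which finite-dimensionality and the unitarity of $U$ settle immediately.
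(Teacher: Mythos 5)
Your proposal is correct and takes essentially the same approach as the paper: the paper likewise computes $UK=\kappa SK$ and $USK=(\kappa-\kappa')SKT+\kappa' K$ from Lemma~\ref{lem:123} together with $S^2=I$, which is exactly your pair of generator computations. The only difference is the passage from $U\mathcal{L}\subseteq\mathcal{L}$ to equality: the paper packages the two identities as $UL=L\Lambda$ for a block operator $\Lambda$ on $\mathcal{V}_p\oplus\mathcal{V}_p$ and exhibits $\Lambda^{-1}$ explicitly, whereas you invoke unitarity of $U$ and finite-dimensionality; both conclusions are sound one-line arguments.
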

\begin{proof}
It holds that $UK=\kappa SK$ and $USK=(\kappa-\kappa')SK T+\kappa' K$ by Lemma~\ref{lem:123}, and so we obtain 
\begin{align}\label{eq:key}
    UL=L\Lambda, 
\end{align}
where 
\begin{equation}\label{eq:Lambda}
    \Lambda=\begin{bmatrix} 0 & \kappa' \\ \kappa & (\kappa-\kappa') T \end{bmatrix}.
\end{equation} 
Then $U(\mathcal{L})\subset \mathcal{L}$. 
The inverse of $\Lambda$ exists. Indeed, 
\[\Lambda^{-1}=(-\kappa\kappa')^{-1}\begin{bmatrix} (\kappa-\kappa') T & -\kappa' \\ -\kappa & 0 \end{bmatrix}.\] Then $\mathcal{L}= U(\mathcal{L})$. 
\end{proof}

In this subsection, we see that more detailed spectral information. For example, let us give an example of a refinement of Corollary~\ref{cor:outlineeigensystem} as a consequence of this section. It is easy to check that the candidates of two eigenvalues are nominated from each eigenvalue $\pm 1\in \mathrm{Spec}(T)$ from the one-to-two map (\ref{eq:specmap0}). Such candidates for $+1$ (resp.$-1$) includes $-\kappa'$ (resp.$-\kappa'$). As we will see, the candidates $\pm \kappa'$ are rejected as the eigenvalue of $U|_\mathcal{L}$ unfortunately. 
However the rejected candidates are recovered as the eigenvalues of $U|_{\mathcal{L}^\perp}$. 
After all,  Corollary~\ref{cor:outlineeigensystem} can be refined by \begin{enumerate}
    \item $2p|V|-(m_+ + m_-)$ eigenvalues of $U|_{\mathcal{L}}$ are obtained from eigenvalues of $T$ except $\pm 1 \in \mathrm{Spec}(T)$ by the one-to-two map (\ref{eq:specmap0}) (or equivalently (\ref{eq:mapping})). 
    Here $m{_\pm}:=\dim\ker(\pm1-T)$. Due to such a singularity of $\pm 1\in \mathrm{Spec}(T)$, we have $\pm \kappa'\notin \mathrm{Spec}(U|_\mathcal{L})$ when $\kappa+\kappa'\neq 0$.    
    \item The eigenvalues of $U|_{\mathcal{\mathcal{L}^\perp}}$ are $\kappa'$ and $-\kappa'$ with $|E|-p|V|+m_\mp$ multiplicities, respectively.
\end{enumerate}

The following main result provides the information of eigenspace of $U$ and shows how to map not only eigenvalues but also the eigenspace of $T$ to the eigenspace $U|_{\mathcal{L}}$.   
Note that if $\kappa+\kappa'=0$, then the set of eliminated points $\{\pm \kappa'\}$ overlaps with $\{\pm \kappa\}$. This is the reason that we need the case study in the following theorem.  
\begin{thm}\label{thm:main2}
Let $G$ be a connected graph with $n$ vertices and $m$ edges, $U=SC$ the evolution matrix of a coined quantum walk on $G$. 
Suppose that $\mathrm{Spec} (C_u)= \{\kappa,\kappa' \}$ with $\kappa\neq \kappa'$.  
Set $1\leq  p\leq  \delta(G)$. 
Assume $p= \dim \ker(\kappa-C_u)$ for any $u\in V$.
Then we have 
\begin{enumerate}
\item $\kappa+\kappa'=0$ case
\begin{align}\label{eq:eigenspace1} 
    \ker(\lambda-U|_\mathcal{L}) 
    &= 
    \begin{cases}
    (1+{\kappa'}^{-1}\lambda S)K\ker\left(\;\frac{\mathrm{Im}[(\kappa \kappa')^{-1/2}\lambda]}{\mathrm{Im}[(\kappa/\kappa')^{1/2}]}-T\;\right) & \text{: $\lambda \notin \{\pm \kappa'\}$,}\\
    K\ker(\pm 1+T) & \text{: $\lambda \in  \{\pm \kappa'\}$.}
    \end{cases} 
\end{align}
\item $\kappa+\kappa'\neq 0$ case 
\begin{align}\label{eq:eigenspace2}
    \ker(\lambda-U|_\mathcal{L}) 
    =
    (1+{\kappa'}^{-1}\lambda S)K\ker\left(\;\frac{\mathrm{Im}[(\kappa \kappa')^{-1/2}\lambda]}{\mathrm{Im}[(\kappa/\kappa')^{1/2}]}-T\;\right).
\end{align}
In particular, $\ker(\pm\kappa'-U|_\mathcal{L})=0$. 
\end{enumerate}
On the other hand, $\mathrm{Spec}(U|_{\mathcal{L}^\perp}) = \{\pm \kappa'\}$ and 
    \[ \ker(\pm \kappa'-U|_{\mathcal{L}^\perp}) = \ker K^* \cap \ker(\pm 1 -S). \]
\end{thm}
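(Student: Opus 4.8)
\emph{Strategy.} The plan is to treat the two restrictions $U|_{\mathcal L}$ and $U|_{\mathcal L^\perp}$ separately, using the intertwining relation $UL=L\Lambda$ of Lemma~\ref{lem:Lambda} for the inherited part and the coin identity $C=(\kappa-\kappa')KK^*+\kappa' I$ of Lemma~\ref{lem:123} for its orthogonal complement. I would dispose of $\mathcal L^\perp$ first, since it is the cleaner half and is self-contained.

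For the complement, I would first rewrite $\mathcal L^\perp=(K\mathcal V_p)^\perp\cap(SK\mathcal V_p)^\perp=\ker K^*\cap\ker K^*S$, so in particular $\mathcal L^\perp\subseteq\ker K^*$. On $\ker K^*$ the coin identity collapses to $C=\kappa' I$, whence $U=\kappa' S$ on $\mathcal L^\perp$. Because $S$ is a flip--flop ($S=S^*$, $S^2=I$) and $\mathcal L^\perp=\ker K^*\cap S\ker K^*$ is manifestly $S$-invariant, $S|_{\mathcal L^\perp}$ is a self-adjoint involution with spectrum in $\{\pm1\}$; hence $\mathrm{Spec}(U|_{\mathcal L^\perp})\subseteq\{\pm\kappa'\}$ and $\ker(\pm\kappa'-U|_{\mathcal L^\perp})=\mathcal L^\perp\cap\ker(\pm1-S)$. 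To finish I would note that any $\psi\in\ker K^*\cap\ker(\pm1-S)$ already lies in $\mathcal L^\perp$ (if $S\psi=\pm\psi$ and $K^*\psi=0$ then $K^*S\psi=\pm K^*\psi=0$), which yields the stated equality $\ker(\pm\kappa'-U|_{\mathcal L^\perp})=\ker K^*\cap\ker(\pm1-S)$.

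For $U|_{\mathcal L}$ the engine is $UL=L\Lambda$ with $\Lambda=\bigl[\begin{smallmatrix}0&\kappa'\\ \kappa&(\kappa-\kappa')T\end{smallmatrix}\bigr]$. I would build eigenvectors by hand: for $f\in\ker(\mu-T)$ set $\psi=(1+{\kappa'}^{-1}\lambda S)Kf$ and verify, using $UK=\kappa SK$ and $USK=(\kappa-\kappa')SKT+\kappa' K$, that $U\psi=\lambda\psi$ holds exactly when $\lambda^2-(\kappa-\kappa')\mu\lambda-\kappa\kappa'=0$; a short computation using $|\kappa|=|\kappa'|=|\lambda|=1$ rewrites the resulting $\mu=(\lambda^2-\kappa\kappa')/((\kappa-\kappa')\lambda)$ into the real symmetric ratio $\mathrm{Im}[(\kappa\kappa')^{-1/2}\lambda]/\mathrm{Im}[(\kappa/\kappa')^{1/2}]$ appearing in the theorem. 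The decisive quantitative input is the norm identity $\|(1+cS)Kf\|^2=2(1+\mu\,\mathrm{Re}\,c)\|f\|^2$ for $Tf=\mu f$ and $|c|=1$, which, since $\mathrm{Spec}(T)\subseteq[-1,1]$ (because $T=K^*SK$ with $K$ an isometry), vanishes only when $\mu=\pm1$ and $c={\kappa'}^{-1}\lambda=\mp1$, that is, only for the candidate $\lambda=-\kappa'$ attached to $\mu=+1$ and $\lambda=\kappa'$ attached to $\mu=-1$. This is exactly the mechanism rejecting $\pm\kappa'$; moreover at $\mu=\pm1$ one has $SKf=\pm Kf$, so the generic vector degenerates and the surviving eigenvector is the bare $Kf$, which explains the two shapes of the answer in \eqref{eq:eigenspace1}.

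Completeness would come from a dimension count. I would determine $\ker L$: the equation $Kf+SKg=0$ forces $f=-Tg$ and $g=-Tf$, hence $f,g\in\ker(1-T^2)=\ker(1-T)\oplus\ker(1+T)$, and conversely these relations suffice on $\ker(1-T^2)$; thus $\dim\ker L=m_++m_-$ and $\dim\mathcal L=2p|V|-(m_++m_-)$. For $\mu\neq\pm1$ both roots $\lambda_\pm$ give nonzero, independent eigenvectors, contributing $2(p|V|-m_+-m_-)$ dimensions, while the behaviour at $\mu=\pm1$ hinges on whether the surviving root $\pm\kappa$ coincides with the rejected candidate $\mp\kappa'$. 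When $\kappa+\kappa'\neq0$ these are distinct, each of $\mu=\pm1$ contributes one surviving family, the uniform formula \eqref{eq:eigenspace2} holds, and $\ker(\pm\kappa'-U|_{\mathcal L})=0$; when $\kappa+\kappa'=0$ the two roots collapse to a single value, the generic vector becomes $0$, and the eigenspace is instead $K\ker(\pm1+T)$ as in \eqref{eq:eigenspace1}. In both cases the contributions sum to $2p|V|-(m_++m_-)=\dim\mathcal L$, forcing the constructed eigenvectors to exhaust $\mathcal L$ and pinning down each eigenspace. I expect the main obstacle to be precisely this completeness step: one must run the $\mu=\pm1$ analysis carefully, track the coincidence at $\kappa+\kappa'=0$, and ensure the bookkeeping is not spoiled by an accidental equality $\lambda_+=\lambda_-$ at some interior $\mu$, where the counting is most delicate.
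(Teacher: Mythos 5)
Your proposal is correct, but it runs along a genuinely different track from the paper's proof. The paper works entirely upstairs in $\mathcal{V}_p\oplus\mathcal{V}_p$: its key step is the identity $\ker L=\ker({\kappa'}^2-\Lambda^2)$, which converts the eigenvalue equation for $U|_\mathcal{L}$ (written as $L(\lambda-\Lambda)\phi=0$ with $\phi\notin\ker L$) into the purely matrix-theoretic condition $({\kappa'}^2-\Lambda^2)(\lambda-\Lambda)\phi=0$, $\phi\notin\ker({\kappa'}^2-\Lambda^2)$; Gaussian elimination then settles $\lambda\neq\pm\kappa'$, and the case $\lambda=\pm\kappa'$ is decided by the Jordan structure of $\Lambda$, namely $\ker(\pm\kappa'-\Lambda)^2=\ker(\pm\kappa'-\Lambda)$ exactly when $\kappa+\kappa'\neq 0$. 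Because every vector of $\mathcal{L}$ is of the form $L\phi$, that analysis is exhaustive by construction, and no dimension count is needed inside the theorem (such counts appear only later, in Corollary~\ref{cor:dimensions}). You instead build eigenvectors downstairs in $\mathcal{A}$, reject the candidates $\pm\kappa'$ through the norm identity $\|(1+cS)Kf\|^2=2(1+\mu\,\mathrm{Re}\,c)\|f\|^2$ --- a transparent and attractive mechanism that the paper does not use --- and then pay for completeness with a dimension count. That count is sound, but it rests on two facts you assert rather than prove: (i) $SKf=\pm Kf$ whenever $Tf=\pm f$, which needs the observation that $SKf\mp Kf\in\mathcal{L}\cap\mathcal{L}^\perp=\{0\}$ (this is the paper's display (\ref{eq:EV1})); and (ii) the absence of a coincidence $\lambda_+=\lambda_-$ at any interior $\mu$, which you flag as the delicate point but leave open. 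Point (ii) closes in one line: writing $\kappa=e^{i\xi}$, $\kappa'=e^{i\eta}$, the discriminant of $\lambda^2-(\kappa-\kappa')\mu\lambda-\kappa\kappa'$ vanishes iff $\mu^2=-4\kappa\kappa'/(\kappa-\kappa')^2=\sin^{-2}[(\xi-\eta)/2]\geq 1$, with equality iff $\kappa+\kappa'=0$, so double roots occur only at $\mu=\pm1$ in the antisymmetric case, exactly as your bookkeeping assumes. With these two insertions your argument is complete; the $\mathcal{L}^\perp$ half of your proof is essentially identical to the paper's.
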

\begin{proof}
First let us see 
\begin{equation}\label{eq:kerL}
    \ker (L)=\ker (\kappa'^2 -\Lambda^2).
\end{equation} 
For any $[f\;g]^\top\in \ker L$, it holds $Kf+SKg=0$ which implies 
\[\begin{bmatrix}f \\ g\end{bmatrix}\in \ker \begin{bmatrix} 1 & T \\ T & 1 \end{bmatrix} \] 
by Lemma~\ref{lem:123}. 
On the other hand, conversely, if $f+Tg=0$ and $Tf+g=0$, then 
\begin{equation}\label{eq:EV1}
    K^*(Kf+SKg)=0,\;K^*S(Kf+SKg)=0,
\end{equation}
which implies $Kf+SKg \in\mathcal{L}\cap \mathcal{L}^\perp=0$. 
Then we have 
    \[\ker L=\ker \begin{bmatrix} 1 & T \\ T & 1 \end{bmatrix}. \]
On the other hand, the Gaussian eliminations to $\kappa'^2-\Lambda^2$ using the expression of $\Lambda$ in (\ref{eq:Lambda}) gives  
    \[ \ker(\kappa'^2 -\Lambda^2) =\ker \begin{bmatrix} 1 & T \\ T & 1 \end{bmatrix}. \]
Then we have $\ker L = \ker(\kappa'^2 -\Lambda^2)$. 

Secondly, let us consider the following eigenequation; $U|_{\mathcal{L}}\psi=\lambda \psi$. 
Since $\psi\in \mathcal{L}$, there exists $\phi=[f\;g]^\top$ such that 
$\psi=L\phi$. By (\ref{eq:key}), we have 
\begin{align} 
U|_{\mathcal{L}}\psi =\lambda \psi 
&\Leftrightarrow L(\lambda -\Lambda)\phi=0,\;\phi\notin \ker L \notag \\
&\Leftrightarrow ({\kappa'}^2-\Lambda^2)(\lambda - \Lambda)\phi=0,\;\phi\notin \ker ({\kappa'}^2-\Lambda^2).\label{eq:iden}
\end{align}
This equation is the starting point for the spectral analysis on $U$. 
In the following let us consider the case for  $\lambda\neq \pm \kappa'$. Then 
(\ref{eq:iden}) is reduced to $(\lambda -\Lambda)\phi=0$.
Therefore $\phi=[f\;g]^\top$ satisfies 
\begin{align*}
    \begin{bmatrix}\lambda & -\kappa' \\ -\kappa & -(\kappa-\kappa')T+\lambda\end{bmatrix}\phi=0 
    & \Leftrightarrow 
    \begin{bmatrix}\lambda & -\kappa' \\ 0 & \lambda^2-(\kappa-\kappa')T\lambda -\kappa\kappa'\end{bmatrix}\phi =0
\end{align*}
by the Gaussian elimination. 
We can deform $\ker(-\lambda^2+(\kappa-\kappa')T\lambda +\kappa\kappa')$ by 
\[ \ker\left(\frac{\mathrm{Im}[(\kappa\kappa')^{-1/2}\lambda]}{\mathrm{Im}[(\kappa/\kappa')^{1/2}]}-T\right). \]
Therefore, $(\lambda-\Lambda)\phi=0$ with $\phi=[f\;g]^\top$ and $\lambda\neq \pm\kappa'$ if and only if 
$f=\kappa'\lambda^{-1} g$ and 
\[ g\in \ker\left(\frac{\mathrm{Im}[(\kappa\kappa')^{-1/2}\lambda]}{\mathrm{Im}[(\kappa/\kappa')^{1/2}]}-T\right). \] 
This gives the proof for the cases of both $\kappa+\kappa'=0$ and $\kappa+\kappa'\neq 0$ with $\lambda \notin \{\pm \kappa'\}$. 

Thirdly, let us consider the case for $\lambda= \pm \kappa'$. In this case, $\phi=[f\;g]^\top$ must satisfy that 
    \begin{equation}\label{eq:condition}
    (\pm\kappa'-\Lambda)^2\phi=0,\; \ker(\pm\kappa'-\Lambda)\phi\neq 0
    \end{equation}
by (\ref{eq:iden}). 
Direct computations of $\ker(\pm \kappa'-\Lambda)$ and $\ker(\pm \kappa'-\Lambda)^2$ from the explicit expression of $\Lambda$ in (\ref{eq:Lambda}) lead to 
\begin{align} 
\ker(\pm\kappa'-\Lambda)
&= \left\{ \begin{bmatrix} f \\ \pm f \end{bmatrix} \;\bigg|\; f\in \ker(\pm 1+T) \right\} \\
\label{eq:kerL2} \\ 
\ker(\pm\kappa'-\Lambda)^2
&=
\begin{cases}
\ker(\pm \kappa'-\Lambda) & \text{: $\kappa+\kappa'\neq 0$,} \\
\\ 
\left\{ \begin{bmatrix} f \\ g \end{bmatrix} \;\bigg|\; f,g\in \ker(\pm 1+T)\right\} & \text{: $\kappa+\kappa'=0$.}
\end{cases} \label{eq:jordan}
\end{align}
For any $f,g\in \ker(\pm 1+T)$, 
noting $Kg\pm SKg=0$ by (\ref{eq:EV1}), we have 
    \[ L\begin{bmatrix}f \\ g\end{bmatrix}
    =Kf + SKg= K(f \mp g).  \] 
Thus when $\kappa+\kappa'=0$, 
\[L(\ker(\pm\kappa'-\Lambda)^2\setminus \ker(\pm \kappa'-\Lambda)) = K\ker(\pm 1+T). \]
On the other hand, when $\kappa+\kappa'\neq 0$, then 
    \begin{equation}\label{eq:pm1}
    \pm \kappa' \notin \mathrm{Spec}(U|_\mathcal{L}). \end{equation}
which completes the proof of the spectral of $U|_\mathcal{L}$.

Finally, let us consider the case for $\psi\in \mathcal{L}^\perp=\ker(K^*)\cap \ker(K^*S)$. 
It holds 
\[\ker(K^*)\cap \ker(1\pm S) \subset \ker(K^*S)\]
Then we have 
    \[ \left(\ker(K^*)\cap \ker(1-S)\right) \oplus \left(\ker(K^*)\cap \ker(1+S)\right) \subset \ker(K^*)\cap \ker(K^*S) \]
On the other hand, since the inverse inclusion obviously 
holds, after all, we have 
\begin{align}
    \left(\ker(K^*)\cap \ker(1-S)\right) \oplus \left(\ker(K^*)\cap \ker(1+S)\right) = \ker(K^*)\cap \ker(K^*S)
\end{align}
For any $\psi\in \ker(K^*)\cap \ker(1-S)$, we have 
\[U\psi=S((\kappa-\kappa')KK^*+\kappa')\psi=\kappa'S\psi=\kappa'\psi,\]
while 
for any $\psi\in \ker(K^*)\cap \ker(1+S)$, we have 
\[U\psi=S((\kappa-\kappa')KK^*+\kappa')\psi=\kappa'S\psi=-\kappa'\psi.\]
Thus $\mathrm{Spec}(U|_{\mathcal{L}^\perp})=\{\pm \kappa'\}$ and $\ker(\pm \kappa'-U|_{\mathcal{L}^\perp})=\ker(\pm 1-S)$. 
Then we obtained all of the desired conclusion.
\end{proof}
\begin{cor}\label{cor:smt}(Spectral map) 
Let $\kappa=e^{i\xi}$, $\kappa'=e^{i\eta}$ and $\lambda=e^{i\phi}\in \mathrm{Spec}(U|_{\mathcal{L}})$. Then from Theorem~\ref{thm:main2}, if $\mu\in \mathrm{Spec}(T)$, then the following relation holds. 
\begin{equation}\label{eq:mapping}
    \sin[\phi-(\xi+\eta)/2]=\mu \sin[(\xi-\eta)/2].
\end{equation}
The geometric relation to show how the eigenvalues of $T$ is lifted up to the eigenvalues of $U|_{\mathcal{L}}$ is depicted by Figure~\ref{fig:1}.  
\end{cor}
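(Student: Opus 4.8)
The plan is to read off the correspondence between $\lambda\in\mathrm{Spec}(U|_{\mathcal{L}})$ and $\mu\in\mathrm{Spec}(T)$ that is already encoded in Theorem~\ref{thm:main2} and then rewrite it in the polar coordinates $\kappa=e^{i\xi}$, $\kappa'=e^{i\eta}$, $\lambda=e^{i\phi}$. Both branches of Theorem~\ref{thm:main2} state, for an eigenvalue $\lambda\neq\pm\kappa'$ of $U|_{\mathcal{L}}$, that $\lambda$ arises precisely from an eigenvalue
\[ \mu=\frac{\mathrm{Im}[(\kappa\kappa')^{-1/2}\lambda]}{\mathrm{Im}[(\kappa/\kappa')^{1/2}]} \]
of $T$, via the eigenspace expressions \eqref{eq:eigenspace1} and \eqref{eq:eigenspace2}. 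Thus the entire content of the corollary is to evaluate the numerator and denominator of this fraction under the substitution of the exponential forms.

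First I would fix the branch of the square roots consistently with the exponential representation, namely $(\kappa\kappa')^{1/2}=e^{i(\xi+\eta)/2}$ and $(\kappa/\kappa')^{1/2}=e^{i(\xi-\eta)/2}$ (these are compatible, since their product is $e^{i\xi}=\kappa$ and the ratio $\mu$ is anyway invariant under flipping both signs). Then $(\kappa\kappa')^{-1/2}\lambda=e^{-i(\xi+\eta)/2}e^{i\phi}=e^{i(\phi-(\xi+\eta)/2)}$, whose imaginary part is $\sin[\phi-(\xi+\eta)/2]$, while $(\kappa/\kappa')^{1/2}=e^{i(\xi-\eta)/2}$ has imaginary part $\sin[(\xi-\eta)/2]$. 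Substituting these into the displayed fraction and clearing the denominator gives exactly
\[ \sin[\phi-(\xi+\eta)/2]=\mu\,\sin[(\xi-\eta)/2], \]
which is the asserted relation \eqref{eq:mapping}.

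The only point requiring care, rather than a genuine obstacle, is the well-definedness of the fraction, i.e.\ that the denominator $\sin[(\xi-\eta)/2]$ does not vanish; this is guaranteed by the standing hypothesis $\kappa\neq\kappa'$, because $\sin[(\xi-\eta)/2]=0$ forces $\xi\equiv\eta\pmod{2\pi}$ and hence $\kappa=\kappa'$. Finally I would note that \eqref{eq:mapping} continues to hold for the boundary eigenvalues $\lambda=\pm\kappa'$ occurring in the case $\kappa+\kappa'=0$: there the second line of \eqref{eq:eigenspace1} pairs $\lambda=\pm\kappa'$ with $\mu=\mp1\in\mathrm{Spec}(T)$, and a direct substitution of $\kappa'=-\kappa$ (so $\eta=\xi+\pi$) into the two sides confirms $\sin[\phi-(\xi+\eta)/2]=\mp\sin[(\xi-\eta)/2]$. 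Hence \eqref{eq:mapping} describes all of $\mathrm{Spec}(U|_{\mathcal{L}})$ uniformly, completing the proof.
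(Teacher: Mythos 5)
Your proposal is correct and follows exactly the route the paper intends: the corollary is the direct unwinding of the ratio $\mathrm{Im}[(\kappa\kappa')^{-1/2}\lambda]/\mathrm{Im}[(\kappa/\kappa')^{1/2}]$ appearing in the eigenspace formulas \eqref{eq:eigenspace1}--\eqref{eq:eigenspace2} of Theorem~\ref{thm:main2} under the substitutions $\kappa=e^{i\xi}$, $\kappa'=e^{i\eta}$, $\lambda=e^{i\phi}$, which is precisely your computation. Your added remarks on branch consistency of the square roots, the nonvanishing of $\sin[(\xi-\eta)/2]$ via $\kappa\neq\kappa'$, and the boundary pairing $\lambda=\pm\kappa'\leftrightarrow\mu=\mp1$ when $\kappa+\kappa'=0$ are sound and only make explicit what the paper leaves implicit.
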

\begin{rem}\label{rem:boundary}
If $\lambda=\pm\kappa$, 
    \[ \ker\left( \frac{\mathrm{Im}[(\kappa\kappa')^{-1/2}\lambda]}{\mathrm{Im}[(\kappa/\kappa')^{1/2}]}-T \right)=\ker(\pm 1-T), \]
while $\lambda=\pm \kappa'$,
    \[ \ker\left( \frac{\mathrm{Im}[(\kappa\kappa')^{-1/2}\lambda]}{\mathrm{Im}[(\kappa/\kappa')^{1/2}]}-T \right)=\ker(\mp 1-T). \]
By (\ref{eq:pm1}), when $\kappa+\kappa'\neq 0$, the eigenvalue $\pm 1$ of $T$ is lifted up to the eigenvalues of $U|_{\mathcal{L}}$ as not $\pm \kappa'$ but $\pm \kappa$. The eigenvector of $T$ for $\pm 1$ is mapped by $(1\pm \kappa'\kappa S)$. See also Figure \ref{fig:1}.  
\end{rem}

Let $\mathcal{O}:=K\mathcal{V}_p$ and $\mathcal{T}:=SK\mathcal{V}_p$. Then $\mathcal{L}=\mathcal{O}+\mathcal{T}$. 
We obtain the information of the dimensions as follows. 
\begin{cor}\label{cor:dimensions}(Dimension analysis) 
Let $m_{\pm}$ be the dimensions of $\ker(\pm1-T)$. 
Then we have 
\begin{enumerate}
    \item $\dim(\mathcal{O}+\mathcal{T})=\dim \mathcal{L}=2p|V|-(m_+ + m_-)$;
    \item $\dim(\mathcal{O}\cap \mathcal{T})=m_+ + m_-$; 
    \item $\dim(\mathcal{O}+\mathcal{T})^\perp= 2|E|-2p|V|+m_+ + m_-$; moreover $(\mathcal{O}+\mathcal{T})^\perp=\ker(\kappa'-U|_{\mathcal{L}^\perp})\oplus \ker(-\kappa'-U|_{\mathcal{L}^\perp})$ and 
    \[ \dim(\ker(\pm\kappa'-U|_{\mathcal{L}^\perp}))=|E|-p|V|+m_{\mp}. \]
\end{enumerate}
\end{cor}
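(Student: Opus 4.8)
The plan is to reduce all three assertions to dimension counts built from the two isometries $K$ and $SK$ and the self-adjoint operator $T=K^*SK$. Since $K^*K=I_{\mathcal{V}_p}$ by Lemma~\ref{lem:123} and $S$ is unitary with $S^2=I$, both $K$ and $SK$ are isometries, so $\dim\mathcal{O}=\dim\mathcal{T}=p|V|$ and, by the Grassmann identity, $\dim\mathcal{L}=2p|V|-\dim(\mathcal{O}\cap\mathcal{T})$. Thus parts (1) and (2) both follow once $\dim(\mathcal{O}\cap\mathcal{T})$ is known. I would compute it by showing that $\psi\mapsto K^*\psi$ is a linear isomorphism from $\mathcal{O}\cap\mathcal{T}$ onto $\ker(I-T^2)=\ker(1-T)\oplus\ker(1+T)$: if $\psi=Kf=SKg$, then applying $K^*$ and $K^*S$ and using $K^*K=I$, $T=K^*SK$ and $S^2=I$ gives $f=Tg$, $g=Tf$, hence $T^2f=f$; conversely, for $f\in\ker(I-T^2)$ the relations established inside the proof of Theorem~\ref{thm:main2} (namely $Kf=SKf$ when $Tf=f$ and $Kf=-SKf$ when $Tf=-f$, both consequences of (\ref{eq:EV1})) place $Kf$ in $\mathcal{O}\cap\mathcal{T}$, with $K^*(Kf)=f$. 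As $T$ is self-adjoint, $\dim\ker(I-T^2)=m_++m_-$, which proves (2) and then (1).

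For part (3), the first formula $\dim(\mathcal{O}+\mathcal{T})^\perp=2|E|-\dim\mathcal{L}=2|E|-2p|V|+m_++m_-$ is immediate from (1). The orthogonal decomposition follows from $\mathcal{L}^\perp=\mathcal{O}^\perp\cap\mathcal{T}^\perp=\ker K^*\cap\ker(K^*S)$ together with the splitting $\ker K^*\cap\ker(K^*S)=(\ker K^*\cap\ker(1-S))\oplus(\ker K^*\cap\ker(1+S))$ and the identification $\ker(\pm\kappa'-U|_{\mathcal{L}^\perp})=\ker K^*\cap\ker(\pm1-S)$, all of which are proved at the end of Theorem~\ref{thm:main2}.

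The substantive step is the dimension formula $\dim(\ker K^*\cap\ker(1-S))=|E|-p|V|+m_-$ (and its antisymmetric analogue with $m_+$). Here I would use that $S$ interchanges $\mathcal{O}$ and $\mathcal{T}$, so $\mathcal{L}$, and hence $\mathcal{L}^\perp$, are $S$-invariant; since $S$ is self-adjoint, its spectral projections commute with the orthogonal projection onto $\mathcal{L}$, and each eigenspace of $S$ splits as $\ker(1\mp S)=(\ker(1\mp S)\cap\mathcal{L})\oplus(\ker(1\mp S)\cap\mathcal{L}^\perp)$. Using $\dim\ker(1\mp S)=|E|$ for the flip-flop shift, and noting that on $\ker(1-S)$ one has $\ker(K^*S)=\ker K^*$ so that $\ker(1-S)\cap\mathcal{L}^\perp=\ker(1-S)\cap\ker K^*$ (likewise for $\ker(1+S)$), the problem reduces to computing $\dim(\mathcal{L}\cap\ker(1\mp S))$.

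I would compute the latter through the surjection $L\colon\mathcal{V}_p\oplus\mathcal{V}_p\to\mathcal{L}$, $L(f\oplus g)=Kf+SKg$, whose kernel is $\ker(\kappa'^2-\Lambda^2)=\{(f,-Tf):f\in\ker(I-T^2)\}$ by (\ref{eq:kerL}) and the form of $\Lambda$ in (\ref{eq:Lambda}). A direct calculation shows $S(Kf+SKg)=Kf+SKg$ precisely when $f-g\in\ker(1-T)$, so $\mathcal{L}\cap\ker(1-S)=L(W)$ with $W=\{(f,g):f-g\in\ker(1-T)\}$ and $\dim W=p|V|+m_+$. The decisive observation is that $\ker L\subseteq W$: for $(f,-Tf)\in\ker L$ one has $f-(-Tf)=(I+T)f=2f_+\in\ker(1-T)$, where $f_+$ is the $\ker(1-T)$-component of $f$. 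Hence $\dim(\mathcal{L}\cap\ker(1-S))=\dim W-\dim\ker L=p|V|-m_-$, and therefore $\dim(\ker K^*\cap\ker(1-S))=|E|-p|V|+m_-$; the identical argument with $f+g\in\ker(1+T)$ gives $|E|-p|V|+m_+$ for $\ker(1+S)$. Together with the decomposition this yields part (3). I expect this last $S$-grading computation to be the main obstacle, since the spectral map of Theorem~\ref{thm:main2} controls the eigenvalues of $U|_{\mathcal{L}}$ but not directly the splitting of $\mathcal{L}$ into $S$-eigenspaces; the cleanness of the answer hinges entirely on the inclusion $\ker L\subseteq W$.
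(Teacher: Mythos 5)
Your proof is correct, but it takes a genuinely different route from the paper's. For parts (1)--(2) the paper argues in the opposite order: it first obtains $\dim\mathcal{L}=2p|V|-(m_++m_-)$ by counting eigenvalues of $U|_{\mathcal{L}}$ through the spectral map of Theorem~\ref{thm:main2} (one-to-two off $\pm1$, one-to-one at $\pm1$), and only then deduces $\dim(\mathcal{O}\cap\mathcal{T})=m_++m_-$ from the Grassmann identity; you instead compute $\dim(\mathcal{O}\cap\mathcal{T})$ directly, exhibiting $K^*$ as an isomorphism from $\mathcal{O}\cap\mathcal{T}$ onto $\ker(1-T^2)$, which is more elementary in that it needs no spectral theory of $U$ at all, only Lemma~\ref{lem:123}, the relations derivable from (\ref{eq:EV1}), and $\mathcal{L}\cap\mathcal{L}^\perp=0$. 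For part (3) the skeleton is shared --- both reduce to computing $\dim(\mathcal{L}\cap\ker(1\mp S))$ and subtract from $\dim\ker(1\mp S)=|E|$ --- but the paper computes that intersection spectrally, constructing for each $\mu\in\mathrm{Spec}(T)\setminus\{\pm1\}$ explicit combinations $\varphi_g,\varphi'_g$ of the two lifted eigenvectors lying in $\ker(1-S)$ and $\ker(1+S)$ (with an orthogonality check over a CONS of $\ker(\mu-T)$), and treating $\mu=\pm1$ separately, whereas you compute it by pulling back along the parametrization $L$: $\mathcal{L}\cap\ker(1-S)=L(W)$ with $W=\{(f,g):f-g\in\ker(1-T)\}$, and the inclusion $\ker L\subseteq W$ (immediate since $(1-T)(1+T)f=0$ for $(f,-Tf)\in\ker L$) gives $\dim L(W)=\dim W-\dim\ker L=p|V|-m_-$ in one stroke, with the $\ker(1+S)$ case by symmetry. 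Your version buys two things: it avoids the case analysis over $\mathrm{Spec}(T)$ entirely, and it makes explicit the $S$-invariance of $\mathcal{L}$ (hence of $\mathcal{L}^\perp$) needed to split $\ker(1\mp S)$ across $\mathcal{L}\oplus\mathcal{L}^\perp$ --- a step the paper uses implicitly when it subtracts $p|V|-m_\mp$ from $\dim\ker(1\mp S)$. What the paper's route buys in exchange is finer structural information: it identifies, eigenvalue by eigenvalue, how the lifted eigenvectors of $U|_{\mathcal{L}}$ distribute between the symmetric and antisymmetric eigenspaces of $S$, which is of independent interest beyond the dimension count.
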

\begin{rem}
Since $\dim(\ker(\pm\kappa'-U|_{\mathcal{L}^\perp}))\geq 0$, then 
    \[ m_{\pm}\geq p|V|-|E|  \]
by part (iii) of Corollary~\ref{cor:dimensions}. 
\end{rem}

Now we give the proof of Corollary~\ref{cor:dimensions}. 
\vskip\baselineskip
\noindent{\bf Proof of part 1.} \\
An eigenvalue of $T$ and its corresponding eigenvalues of $U|_{\mathcal{L}}$ mapped by (\ref{eq:mapping}) are one-to-two correspondence except $\pm1 \in \ker(\pm 1-T)$. The eingenvalues $\pm1 \in \ker(\pm 1-T)$ and its corresponding eigenvalues of $U|_{\mathcal{L}}$ are one-to-one by Remark~\ref{rem:boundary}.
Then since $T$ is a self adjoint operator, we have 
\begin{align*}
    \dim \mathcal{L} = 2\dim{\mathcal{V}_p}-(m_+ + m_-)=2=2p|V|-(m_+ + m_-).
\end{align*}\qed

\noindent {\bf Proof of part 2.}\\
The dimension of $\mathcal{L}$ can be computed by the other way as follows: 
\begin{align*}
    \dim \mathcal{L}&=\dim (K\mathcal{V}_p)+\dim(SK\mathcal{V}_p)-\dim(K\mathcal{V}_p \cap SK\mathcal{V}_p) \\
    &= 2p|V|-\dim(K\mathcal{V}_p \cap SK\mathcal{V}_p) \\
    &= 2p|V|-(m_+ + m_-).
\end{align*}
Here the second equality derives from that $K$ is an  injection because of $K^*K=I$ in the statement of Lemma~\ref{lem:123} and the third equality is obtained by part (i) of Corollary \ref{cor:dimensions}.
Then we obtain the desired conclusion. \qed
\vskip\baselineskip
\noindent{\bf Proof of part 3.}\\
First let us consider the relation between two eigenvalues $\lambda_\mu^{(1)}$ and $\lambda_\mu^{(2)}$ of $U|_{\mathcal{}L}$ obtained by $\mu\in \mathrm{Spec}(T)$ the eigenvalue. 
From (\ref{eq:mapping}), the eigenvalues $\lambda_\mu^{(1,2)}$ are the solution of 
    \begin{equation}\label{eq:mappingquad} 
    \lambda^2 - (\kappa-\kappa)\mu \lambda -\kappa \kappa'=0. 
    \end{equation}
Then we have 
\begin{equation}\label{eq:kaitokeisuu}
    \lambda_\mu^{(1)}\lambda_\mu^{(2)}=-\kappa\kappa'.
\end{equation}
Note that there exists $\mu\in[-1,1]$ such that $\lambda_\mu^{(1)}=\lambda_\mu^{(2)}$ only if $\kappa+\kappa'=0$.
Let us consider the case for $\kappa+\kappa'\neq 0$ first. 
Assume $\mu\notin\{\pm 1\}$. This is equivalent to 
\begin{equation}
    \lambda_\mu^{(1,2)}\notin\{\pm \kappa, \pm\kappa'\}
\end{equation} by (\ref{eq:mappingquad}).  
Let $g\in \ker(\mu -T)$. Then the eigenvectors lifted up to the ones of $U|_{\mathcal{L}}$ are 
\[ \psi_1=(1+\kappa'^{-1}\lambda_\mu^{(1)}S)Kg \text{\;and\;} \psi_2=(1+\kappa'^{-1}\lambda_\mu^{(2)}S)Kg. \]
by Theorem~\ref{thm:main2}. 
The subspace inherited by $\ker(\mu-T)$ is denoted by \[\mathcal{L}_\mu:=(1+\kappa^{-1}\lambda_mu^{(1)}S)K\ker(\mu-T). \]
We find a linear combination of $\phi_1$ and $\phi_2$ so that 
    \[\varphi_g=x\psi_1+y\psi_2\in \ker(1-S) \]
with $x,y\neq 0$. 
Indeed, if $x=\kappa'^{-1}\lambda_\mu^{(2)}-1$ and $y=-\kappa'^{-1}\lambda_\mu^{(1)}+1$, then $\varphi_g=\kappa'^{-1}(\lambda^{(2)}_\mu-\lambda^{(1)}_\mu)(1+S)Kg$, which is $\varphi_g\in \ker(1-S)$. 
In the same way, 
    \[\varphi'_g=x'\psi_1+y'\psi_2\in \ker(1+S)\]
with $x',y'\neq 0$, where 
    $x=\kappa'^{-1}\lambda_\mu^{(2)}+1$ and $y=-(\kappa'^{-1}\lambda_\mu^{(1)}+1)$. 
Let $\{g_1,g_2,\dots,g_s\}$ be CONS of $\ker(\mu-T)$. 
Then $\langle \varphi_{g_i}, \varphi_{g_j} \rangle=\langle \varphi_{g_i}', \varphi_{g_j}' \rangle=0$ if $i\neq j$. 
This means $\mathcal{L}_\mu\cap \ker(1-S)=\mathrm{span}(\varphi_{g_j})_{j=1}^{s}$ and $\mathcal{L}_\mu\cap \ker(1+S)=\mathrm{span}(\varphi_{g_j}')_{j=1}^{s}$. 
Then we have 
    \begin{equation}\label{eq:munotpm1}
        \dim(\mathcal{L}_\mu \cap \ker(\pm1-S))=\dim(\mu-T)
    \end{equation}
if $\mu\neq\pm1$. 
In the next, let us consider $\mu\in \{\pm 1\}$. 
By (\ref{eq:mappingquad}), we have 
    \[ \{\lambda_\mu^{(1)},\lambda_\mu^{(2)}\}=\begin{cases} \{\kappa,-\kappa'\} & \text{: $\mu=1$,} \\
    \{-\kappa,\kappa'\} & \text{: $\mu=-1$.}\end{cases} \]
Note that by Remark~\ref{rem:boundary}, the values $\pm \kappa'$ must be omitted as the eigenvalues of $U|_\mathcal{L}$. 
So if $\mu=1$, then the eigenvector which can be lifted up to the eigenvector of $U|_{\mathcal{L}}$ from $\varphi=g\in \ker(1-T)$ is only $(1+\kappa\kappa'S)Kg$. 
Let us see when $g\in \ker(1-T)$, $Kg=SKg$: 
it holds that 
    \begin{align*}
        g\in \ker(1-T) 
        & \Rightarrow K^*SKg-g=0 \\
        & \Rightarrow K^*(SKg-Kg)=0,\; K^*S(Kg-SKg)=0 \\
        & \Rightarrow SKg-Kg\in \mathcal{L} \cap \mathcal{L}^\perp=\{0\}.
    \end{align*}
Then we have $Kg\in \ker(1-S)$, that is, $(1+S)/2\;Kg=Kg$. 
Therefore 
    \begin{align*}
        \varphi=(1+\kappa'^{-1}KS)Kg &= (1+\kappa'^{-1}KS)\frac{1+S}{2} Kg \\
        &= \frac{1+\kappa'^{-1}\kappa}{2}(1+S)Kg\in \ker(1-S). 
    \end{align*}
This implies 
    \begin{align}
        \dim \Lambda_1=\dim \Lambda_1\cap \ker(1-S) &=\dim\ker(1-T)=m_+ \label{eq:1},\\
        \dim \Lambda_1\cap \ker(1+S) = 0. \label{eq:1'}
    \end{align}
In the same way for $\mu=-1$, we have 
    \begin{align}
        \dim \Lambda_{-1}=\dim \Lambda_{-1}\cap \ker(1+S) &=\dim\ker(-1-T)=m_{-}, \label{eq:2}\\
        \dim \Lambda_1\cap \ker(1+S) = 0. \label{eq:2'}
    \end{align}
All up together with (\ref{eq:munotpm1}), (\ref{eq:1}), (\ref{eq:1'}) and (\ref{eq:2}), (\ref{eq:2'}), we finally obtain
\begin{align}
\dim (\mathcal{L}\cap \ker(1-S)) &= \sum_{\mu\in \mathrm{Spec}(T)}\dim(\mathcal{L}_\mu\cap \ker(1-S)) \notag\\
&= \sum_{\mu\in\mathrm{Spec}(T)\setminus\{-1\}}\dim(\ker(\mu-T)) \notag\\
&= p|V|-m_{-} \\
\dim (\mathcal{L}\cap \ker(1-S)) 
&= p|V|-m_{+}.
\end{align}
Therefore 
\begin{align*}
    \dim (\mathcal{L}^\perp\cap \ker (1-S)) &= \dim\ker(1-S)-(p|V|-m_{-})=|E|-p|V|+m_-, \\
    \dim (\mathcal{L}^\perp\cap \ker (1+S)) &= \dim\ker(1+S)-(p|V|-m_{+})=|E|-p|V|+m_+.
\end{align*}
The case for $\kappa+\kappa'=0$ can be proved by almost the same arguments.\qed
\subsection{Twisted quantum walk}\label{subsect:oneform}
Let us put a one form $\theta: A\to \mathbb{R}$ such that $\theta(\bar{a})=-\theta(a)$. We define the twisted shift operator $S_\theta$ on $\ell^2(A)$ by $(S_\theta\psi)(a)=e^{i\theta(a)}\psi(\bar{a})$. 
When $\theta=0$, the usual flip flop shift operator is reproduced. We consider the quantum walk by $U_\theta=S_\theta C$. This quantum walk with such a one form $\theta$ is useful to observe the spectral information on the crystal lattices.   
It is easy to check that the properties of the shift operator $S$ which are used all the proofs in this section are only (i) $\mathrm{Spec}(S)=\{\pm 1\}$ and (ii) $S$ is a unitary operator. An operator satisfying (i) and (ii) is called an involution. In particular, we used the fact that the square of an  involution operator becomes the identity operator. 
Since the operator $S_\theta$ is still an involution for any $\theta$, it is easy to check that all the results on this section holds just changing $S\to S_\theta$ and $T\to T_\theta:=KS_\theta K^*$. We give an example of an application of this fact in Section~\ref{sect:example} for the $\mathbb{Z}^d$ lattice. 
\section{Summary and future's work}
We considered coined quantum walks on graphs whose local coins $C_u$ satisfy
\begin{enumerate}
    \item $\mathrm{Spec}(C_u)=\{ \kappa, \kappa'\}$ ($\kappa\neq \kappa'$),
    \item $\dim(\ker(\kappa-C_u))=p$,
\end{enumerate}
where $1\leq p\leq \delta (G)$. 
Here if $p=1$ and $\kappa=1,\;\kappa'=-1$, then the walk recovers a Szegedy walk. 
We showed that the time evolution operator $U$ is spectral decomposed into $U=U|_{\mathcal{L}}\oplus U|_{\mathcal{L}^\perp}$. We fined the self adjoint operator $T$ on $\ell^2(V;\mathbb{C}^p)$ whose eigenvalues are lifted up as the eigenvalues and the eigenspace of $U|_\mathcal{L}$ by the one-to-two mapping (\ref{eq:mapping}) except $\pm 1\in \mathrm{Spec}(T)$. 
We also showed how the eigenspace of $U|_{\mathcal{L}}$ are obtained from the eigenspace of $T$ by (\ref{eq:eigenspace1}), (\ref{eq:eigenspace2}).  
We clarified that the values $\pm \kappa$ and $\pm \kappa'$ are nominated as the eigenvalues of $U|_\mathcal{L}$ by the mapping (\ref{eq:mapping}), while $\{\pm \kappa'\}$ are not selected as the eigenvalues of $U|_{\mathcal{L}}$. 
But the rejected values $\pm \kappa'$ by the selection of the $\mathrm{Spec}(U|_{\mathcal{L}})$ recovers as the eigenvalues of $U|_{\mathcal{L}^\perp}$. 
Thus totally the eigenpolynomial (\ref{eq:specmap0}) holds. 

After all, interestingly, we confirmed that the spectral structure on this class of quantum walks essentially keeps the spectral structure in the case of $p=1$. 
Only the difference is the shape of the discriminant operator $T$. In our setting, the domain of $T$ is $\ell^2(V;\mathbb{C}^p)$. We gave examples for $\mathbb{Z}^d$ with the moving shift. The discriminant operator $T$ can be regarded as an extension of the double stochastic matrix with some condition. 
In any case, the problem can be reduced to the study on $T$. If $p=1$, then the problem is reduced that of the random walk and we can use well developed facts on random walks if $T$ is reversible, while $p\geq 2$, the reduced ``walk" has an internal degree of freedom $p$ and it seems that there are small insights into $T$ comparing with the random walks with the huge insights in the present stage. Therefore exploring study on $T$ is one of the interesting future's problems.

\begin{figure}[htbp]
 \centering
 \includegraphics[keepaspectratio, width=15cm]
      {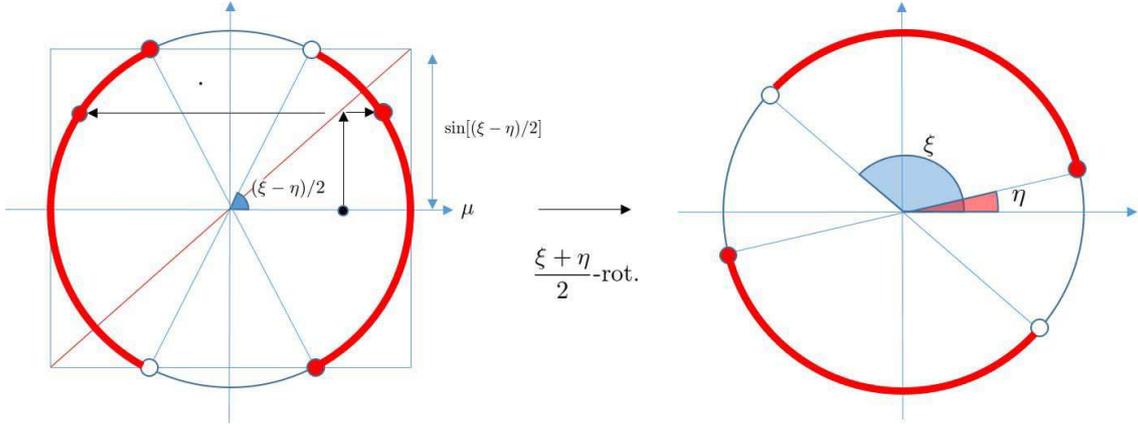}
 \caption{The way of the mapping from $\mathrm{Spec}(T)$ to $\mathrm{Spec}(U_{\mathcal{L}})$: The left figure shows how each eigenvalue of $T$; $\mu$, is mapped on the unit circle as $e^{\phi}e^{-(\xi-\eta)/2}$, where $e^{i\phi}$ is the eigenvalue of $U|_\mathcal{L}$,  $e^{i\xi}=\kappa$ and $e^{i\eta}=\kappa'$. Each eigenvalue of $T$ except $\pm 1$ are mapped to two points of the unit circle, and each eigenvalue of $\pm 1$ is mapped to only one points. The right figure is obtained by the $(\xi+\eta)/2$ rotation of the left figure, and this shows the region including the support of the eigenvalues of $U|_{\mathcal{L}}$. The white two points on the unit circle, which represents the blanks as the support of the $\mathrm{Spec}(U|_\mathcal{L})$, are filled by the eigenvalues of $\mathcal{U}|_{\mathcal{L}^\perp}$ with the multiplicities $|E|-p|V|+m_\pm$. }
 \label{fig:1}
\end{figure}

\vskip\baselineskip
\noindent{\bf Acknowledgments}: E.S. acknowledges financial supports from the Grant-in-Aid of
Scientific Research (C) Japan Society for the Promotion of Science (Grant No.~19K03616) and Research Origin for Dressed Photon. Y.S. thanks the financial supports from JSPS KAKENHI (Grant No. 19H05156) and JST PRESTO (Grant No. JPMJPR20M4).

\begin{small}
\bibliographystyle{jplain}

\end{small}

\end{document}